%% file: main.tex
\documentclass[journal,10pt,letterpaper]{IEEEtran}
\usepackage[letterpaper,margin=1in,columnsep=0.24in]{geometry}
\usepackage{amsmath,amssymb,amsthm,mathtools,bm}
\usepackage{graphicx,booktabs,array,tabularx,multirow}
\usepackage{cite,microtype,url}
\usepackage[hidelinks]{hyperref}
\hypersetup{pdftitle={Order-Optimal Coded Caching With File and Demand Privacy},
 pdfauthor={Han Fang, Nan Liu, Wei Kang},
 pdflang={en-US},
 pdfsubject={Joint privacy, difference secret sharing, and converse bounds},
 pdfkeywords={coded caching, file privacy, demand privacy, information-theoretic security}}
\allowdisplaybreaks[1]
\newtheorem{theorem}{Theorem}
\newtheorem{lemma}{Lemma}
\newtheorem{corollary}{Corollary}

\theoremstyle{remark}
\newtheorem{remark}{Remark}
\theoremstyle{plain}
\newcommand{\F}{\mathbb F}

\newcommand{\cG}{\mathcal G}
\newcommand{\cI}{\mathcal I}
\newcommand{\cJ}{\mathcal J}

\newcommand{\cU}{\mathcal U}
\newcommand{\Rfd}{R_{\mathrm{fd}}^{\star}}
\newcommand{\Rfe}{R_{\mathrm{fe}}^{\star}}
\newcommand{\Mone}{M_{\mathrm{fd}}^{\star}(1)}
\newcommand{\LB}{L_{\mathrm{g}}}
\newcommand{\LA}{L_{\mathrm{m}}}
\newcommand{\LD}{L_{\mathrm{d}}}
\newcommand{\Lall}{L_{\mathrm{all}}}

\title{Order-Optimal Coded Caching With File and Demand Privacy}
\author{Han~Fang, Nan~Liu, and Wei~Kang%
\thanks{Han Fang is with the School of Cyber Science and Engineering, Southeast University, Nanjing, China (e-mail: \href{mailto:h_fang@seu.edu.cn}{h\_fang@seu.edu.cn}).}%
\thanks{Nan Liu is with the National Mobile Communications Research Laboratory and the School of Cyber Science and Engineering, Southeast University, Nanjing, China (e-mail: \href{mailto:nanliu@seu.edu.cn}{nanliu@seu.edu.cn}).}%
\thanks{Wei Kang is with the School of Information Science and Engineering and the School of Cyber Science and Engineering, Southeast University, Nanjing, China (e-mail: \href{mailto:wkang@seu.edu.cn}{wkang@seu.edu.cn}).}}
\begin{document}
\maketitle
\begin{abstract}
We study coded caching with joint file and demand privacy: each user recovers its requested file while learning nothing about the remaining files and the other users' requests jointly. Let $N$ and $K$ be the numbers of files and users, and let $M$ and $R$ denote the cache memory and delivery rate, both normalized by the file size. For every $N,K\ge2$ and every feasible cache size, we give a scheme whose worst-case delivery rate is at most $9/2$ times the optimum. To reduce the memory used for file shares, the scheme secret-shares $N-1$ differences relative to a reference file. Cached masks supply the correction needed to recover the requested file from its difference. For each integer $t\in\{0,\ldots,K-1\}$, the scheme achieves $M=1+(N-1)t/(K-t)$ and $R=K/(t+1)$. To lower-bound the delivery rate, we compare the joint cache entropy of user groups under alternative demand vectors, using one fixed user's privacy constraint. Two choices of groups determine the optimal delivery rate up to constants at the scale $\min\{K,1+(N-1)/(M-1)\}$ for $M>1$. At $M=1$, the exact optimum is $K$. A complementary bound on joint broadcast entropy gives the minimum memory for unit delivery rate, $1+(N-1)(K-1)$, and the exact tradeoff on an interval ending at that memory. All schemes and bounds apply to repeated as well as distinct requests.
\end{abstract}
\begin{IEEEkeywords}
Coded caching, file privacy, demand privacy, secret sharing, information-theoretic security, converse bounds.
\end{IEEEkeywords}
\input{sections/01_introduction}
\input{sections/02_model}
\input{sections/03_achievability}
\input{sections/04_converse}
\input{sections/05_gap}
\input{sections/06_boundaries}
\input{sections/07_comparisons}
\section{Conclusion}\label{sec:conclusion}
We have given a jointly file- and demand-private caching scheme and lower bounds that determine the optimal delivery rate within a factor $9/2$ for every $N,K\ge2$ and $M\ge1$. Sharing $N-1$ file differences reduces the memory used for file shares, while cached correction terms let each user recover its requested file. The converse compares joint cache entropies across user groups and alternative demands while keeping one user's request fixed. Together, the construction and converse identify the delivery rate scale $\min\{K,1+(N-1)/(M-1)\}$ for $M>1$. At $M=1$, the optimal delivery rate is exactly $K$. A complementary converse determines the minimum memory for unit delivery rate and an exact tradeoff interval ending at that memory.
\appendices
\input{sections/08_appendices}
\input{sections/09_additional_appendices}
\bibliographystyle{IEEEtran}
\bibliography{references}
\end{document}

%% file: sections/01_introduction.tex
\section{Introduction}\label{sec:introduction}
\IEEEPARstart{C}{oded} caching lets a server satisfy several users with one transmission by coordinating what they store before their requests are known. In the shared-link model of Maddah-Ali and Niesen~\cite{man2014}, each user combines the common transmission with its cache to recover a requested file. The same cache and transmission may also reveal files that the user did not request or the requests of other users.

We require both types of information to remain private. \emph{File privacy} protects the contents of all nonrequested files, and \emph{demand privacy} protects the other users' requests. The joint condition allows each user to learn its requested file while keeping the remaining files and requests jointly private. Let $N$ and $K$ denote the numbers of files and users, respectively. We study the minimum worst-case delivery rate under joint privacy for all $N,K\ge2$. Throughout, cache memory $M$ and delivery rate $R$ are normalized by the file size.

\subsection{Related Work}
Secret sharing and delivery keys protect nonrequested files from users in file-private coded caching~\cite{private2018}. The resulting schemes have a factor-$16$ guarantee over a cache range that includes every feasible cache size when $N\ge K$. Secure delivery instead protects the library from an external observer of the broadcast~\cite{secure2015}. To hide user requests, demand-private caching uses virtual-user constructions~\cite{wan2021,gurjar2022} or privacy keys~\cite{yanprivacy2021}. Key superposition combines external content security with demand privacy~\cite{yansuper2021}.

Qi and Ravi~\cite{qi2022} introduced the joint file-and-demand-private problem. They proved that joint privacy implies secrecy of both the library and the demands from an external observer, characterized the exact two-file, two-user tradeoff, and established the achievable points $(1,K)$ and $(N(K-1)+1,1)$. Their optimality result at memory $M=1$ assumes $N\ge2K$. Qi \emph{et al.}~\cite{qi2024} then proposed secret-sharing/private-key and virtual-user baselines, followed by a multilayer construction. Section~II of~\cite{qi2024} restricts attention to mutually distinct requests. In that setting, the paper reports improved low-memory points and the point $(N(K-1),1)$ at unit delivery rate. Table~\ref{tab:positioning} summarizes these results and their demand assumptions.

For file privacy alone, Fang \emph{et al.}~\cite{fang2025} determined the optimum at $M=1$ and the minimum memory at $R=1$ for all $N,K\ge2$, and an optimal small-cache segment for $N=2$ and $K\ge3$. Their minimum memory at unit delivery rate is $(N-1)(K-1)$. Our construction also uses $N-1$ file differences, together with cache corrections that preserve joint file and demand privacy.

\begin{table*}[t]
\caption{Comparison with the two joint-privacy papers. Memory $M$ and delivery rate $R$ are normalized by the file size. Pairs are written as $(M,R)$. In the proposed family, $t\in\{0,\ldots,K-1\}$ is an integer construction parameter.}
\label{tab:positioning}
\centering\small
\renewcommand{\arraystretch}{1.15}
\begin{tabularx}{\textwidth}{@{}>{\raggedright\arraybackslash}p{0.17\textwidth}>{\raggedright\arraybackslash}p{0.18\textwidth}X@{}}
\toprule
Work & Demand vectors & Results\\
\midrule
Qi--Ravi~\cite{qi2022} & All demands & Joint privacy implies external secrecy. Exact two-file, two-user region. Achievable $(1,K)$ and $(N(K-1)+1,1)$. Optimality of $(1,K)$ for $N\ge2K$.\\
Qi \emph{et al.}~\cite{qi2024} & Mutually distinct demands in Sec.~II & Secret-sharing/private-key and virtual-user baselines. Multilayer memory--rate families. Point $(N(K-1),1)$ at unit delivery rate.\\
This work & All demands & Achievable $\bigl(1+(N-1)t/(K-t),K/(t+1)\bigr)$. Converse based on user groups and alternative demands. Factor $9/2$ for all parameters. Exact endpoints and adjacent optimal segments.\\
\bottomrule
\end{tabularx}
\end{table*}

\subsection{Results and Proof Ideas}
The joint-privacy endpoints and achievable families above do not give a uniform approximation to the optimal delivery rate throughout the feasible memory range. We address this gap, including small caches when $N<K$, with a scheme and a converse that determine the optimum within a factor $9/2$ for every $N,K\ge2$ and $M\ge1$.

To reduce the memory used for file shares, we share the $N-1$ differences between the nonreference files and a reference file. Recovering a requested file also requires its reference-file component. Each cache therefore stores masked correction terms that restore this component during decoding. For every integer construction parameter $t\in\{0,\ldots,K-1\}$, let $(M_t,R_t)$ denote the achievable memory--rate pair:
\begin{equation}\label{eq:intro-points}
 M_t=1+\frac{(N-1)t}{K-t},\qquad R_t=\frac{K}{t+1}.
\end{equation}
At the same value of $t$, the memory is $t/(K-t)$ file units smaller than the secret-sharing/private-key point in~\cite[Th.~1]{qi2024}. To prove joint privacy, we show that the distribution of a user's complete observation can be reproduced using only its requested file and index.

To obtain a lower bound that accounts for both the file count and the user count, we compare several demand vectors under the same cache placement. One fixed user, called the \emph{anchor}, requests a reference file in every demand vector. We arrange other users in groups and assign alternative requests for nonreference files to each group. The anchor's privacy keeps the files requested by a group unknown to the anchor, while decoding requires the group's caches and broadcasts to determine those files. Two conditional-entropy identities bound the available information through $M-1$ and $R-1$. Allowing up to $N-1$ users in a group lets several independent files contribute to the converse at each step.

Two choices of groups and demands suffice to prove the factor-$9/2$ guarantee. To determine the memory needed for unit delivery rate, we instead compare two bounds on joint broadcast entropy. Let $\Mone$ denote the minimum memory at unit delivery rate. The complementary converse gives
\begin{equation}\label{eq:intro-endpoint}
 \Mone=1+(N-1)(K-1),
\end{equation}
and determines the optimal tradeoff on an interval ending at this memory. At memory $M=1$, the optimal delivery rate is exactly $K$ for every $N,K\ge2$.

Section~\ref{sec:model} defines the model and the entropy identities. Sections~\ref{sec:achievability}--\ref{sec:gap} give the scheme, the converse, and the constant-factor proof. Sections~\ref{sec:boundaries} and~\ref{sec:numerics} give exact boundary results and numerical comparisons. The appendices provide the remaining proofs and the full-library SPK implementation used for comparison.

%% file: sections/02_model.tex
\section{System Model and Basic Identities}\label{sec:model}
\subsection{Placement, Delivery, and Joint Privacy}
For a positive integer $m$, let $[m]=\{1,\ldots,m\}$, and set $[0]=\varnothing$. Fix integers $N,K\ge2$. A server holds mutually independent, uniformly distributed files $W_1,\ldots,W_N$, each with entropy $B$ bits. Write $W=(W_1,\ldots,W_N)$. For a file-index set $\mathcal S\subseteq[N]$, let $W_{\mathcal S}$ denote the tuple of files in increasing index order. In particular, $W_{2:N}=(W_2,\ldots,W_N)$.

Let $M$ and $R$ be nonnegative real numbers specifying the cache memory and delivery rate, normalized by $B$. User $k\in[K]$ has a cache $Z_k$ with alphabet $\mathcal Z_k$. For a user set $\mathcal S\subseteq[K]$, write $Z_{\mathcal S}=(Z_k:k\in\mathcal S)$ in increasing user-index order. To form the caches before the demands are known, the server generates a random seed $S$ independently of the library and applies placement functions $\phi_k$:
\begin{equation}
 Z_k=\phi_k(W_{[N]},S).
\end{equation}
The cache alphabet satisfies $\log_2|\mathcal Z_k|\le MB$. Private keys, request-masking vectors, and other locally stored control information all count toward this limit.

The random demands $D_1,\ldots,D_K$ are independent and uniform on $[N]$, and independent of the library and the seed $S$. Let $D=(D_1,\ldots,D_K)$ and $D_{-k}=(D_j:j\ne k)$, with users ordered by index. Lowercase $d=(d_1,\ldots,d_K)\in[N]^K$ denotes a fixed demand vector. The random tuple $W_{[N]\setminus\{D_k\}}$ contains the nonrequested files in increasing index order.

For a realized demand vector $d$, let $X_d$ denote the broadcast and $\mathcal X_d$ its alphabet. The server uses a delivery function $\psi_d$ to form
\begin{equation}
 X_d=\psi_d(W_{[N]},S),\qquad \log_2|\mathcal X_d|\le RB.
\end{equation}
The seed $S$ also includes any randomness used for delivery. All public control information is part of $X_d$. User $k$ decodes using $(Z_k,X_d,d_k)$. The demand alphabet is $[N]^K$, so correctness and the delivery rate constraint apply to repeated as well as distinct requests. Privacy is required against each user observing its own cache and the common broadcast. Users do not pool their caches.

Let $X=X_D$ be the broadcast under the random demand vector. For every user $k\in[K]$, correctness and joint privacy require
\begin{align}
 H(W_{D_k}\mid Z_k,X,D_k)&=0,\label{eq:correct}\\
 I(W_{[N]\setminus\{D_k\}},D_{-k};Z_k,X\mid D_k)&=0.
 \label{eq:jointprivacy}
\end{align}
Equation~\eqref{eq:jointprivacy} protects the other demands and the whole tuple of nonrequested files jointly. This is the joint-privacy formulation of~\cite{qi2022}, written conditionally on the observing user's demand. The chain rule applied first to $D_{-k}$ also gives
\[
 I(W_{[N]\setminus\{d_k\}};Z_k,X_d)=0,\qquad \forall k,d.
\]
Every demand vector has positive probability, so each nonnegative mutual-information term conditioned on a fixed demand must vanish. The resulting fixed-demand file-privacy condition is used in the converse.

A pair $(M,R)$ is achievable if, for every real $\eta>0$, sufficiently large files admit a zero-error, perfectly private scheme with cache size at most $(M+\eta)B$ bits and transmission length at most $(R+\eta)B$ bits. For fixed memory $M$, let $\Rfd(M)$ be the infimum of achievable delivery rates. For a nonnegative target delivery rate $r$, define the minimum memory as $M_{\mathrm{fd}}^{\star}(r)=\inf\{M:\Rfd(M)\le r\}$. The arbitrary slack $\eta$ allows control strings of fixed length to have vanishing normalized cost as the files grow. All entropy inequalities below first apply to finite files and then pass to this achievable closure.

\subsection{What a Private User View Can Depend On}
To prove privacy and derive lower bounds, we characterize the conditional distribution of a user's cache and broadcast given the library and demands. The following representation expresses this distribution as a \emph{probability kernel} indexed only by the requested file and index.

\begin{lemma}[Private-view representation]\label{lem:views}
For every user $k\in[K]$ and requested index $i\in[N]$, there is a probability kernel $\mathsf K_{k,i}$ on user observations, indexed by the value of $W_i$. For every library realization $w=(w_1,\ldots,w_N)$ and demand vector $d\in[N]^K$,
\begin{equation}\label{eq:kernel}
 P_{Z_k,X\mid W=w,D=d}
   =\mathsf K_{k,d_k}(\,\cdot\mid w_{d_k}).
\end{equation}
Moreover,
\begin{equation}\label{eq:independences}
 I(W_{[N]};Z_k)=0,\qquad I(W_{[N]},D;X)=0.
\end{equation}
In particular, $I(W_{[N]};X_d)=0$ for every $d$.
\end{lemma}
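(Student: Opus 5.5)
The plan is to read the joint-privacy condition \eqref{eq:jointprivacy} as a conditional independence statement. Independence of $(W,S)$ and $D$ then turns it into the kernel representation \eqref{eq:kernel}. We do this for a fixed user $k$ and index $i$, and then specialize.

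\textbf{Step 1: construct the kernel.} Fix $k$ and $i$. Since $D$ is uniform on $[N]^K$ and independent of $(W,S)$, every pair $(w,d)$ has positive probability. Recall that $(Z_k,X)$ is a deterministic function of $(W,S,D)$ via $\phi_k$ and $\psi_D$. Condition \eqref{eq:jointprivacy} states that, given $D_k$, the pair $(Z_k,X)$ is independent of $(W_{[N]\setminus\{D_k\}},D_{-k})$.

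Conditioning on $D_k=i$, this means that $P_{Z_k,X\mid W=w,D=d}$ with $d_k=i$ equals $P_{Z_k,X\mid W_i=w_i,D_k=i}$. The reason is that the conditioning variables $(W,D)$ consist of $(W_i,D_k)$ together with $(W_{[N]\setminus\{i\}},D_{-k})$. More precisely, $I(A;V\mid C)=0$, with $A=(W_{-i},D_{-k})$, $V=(Z_k,X)$ and $C=(W_i,D_k=i)$, follows from \eqref{eq:jointprivacy} by expanding it conditioned on $D_k=i$: the mutual information $I(W_{-i},D_{-k};V\mid D_k=i)=0$ implies that the conditional mutual information given $W_i$ also vanishes. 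This holds because $W_i$ is independent of $(W_{-i},D_{-k})$ given $D_k=i$, so the chain rule gives $I(A;V\mid W_i,D_k=i)\le I(A;V,W_i\mid D_k=i)=I(A;V\mid D_k=i)+I(A;W_i\mid V,D_k=i)$.

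The last term is the obstacle, and it needs care. I would instead use the chain rule in the other order: $I(A;V,W_i\mid D_k=i)=I(A;W_i\mid D_k=i)+I(A;V\mid W_i,D_k=i)$. Here $I(A;W_i\mid D_k=i)=0$ by independence of the files and demands. So what is actually needed is $I(A;V,W_i\mid D_k=i)=0$, which requires $W_i$ to be recoverable from $V$ given $D_k=i$. Correctness \eqref{eq:correct} supplies exactly this: $W_i$ is a function of $(V,D_k)$, hence $I(A;W_i\mid V,D_k=i)=0$. Combining the two expansions gives $I(A;V\mid W_i,D_k=i)=0$.

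We then define $\mathsf K_{k,i}(\cdot\mid w_i):=P_{Z_k,X\mid W_i=w_i,D_k=i}$, which is well defined since every $w_i$ has positive probability. This establishes \eqref{eq:kernel}.

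\textbf{Step 2: the independences \eqref{eq:independences}.} For the cache, $Z_k=\phi_k(W,S)$ does not depend on $D$. Take any $w$ and pick any $d$ with $d_k=1$. Marginalizing \eqref{eq:kernel} shows that $P_{Z_k\mid W=w}$ depends only on $w_1$. Choosing $d_k=2$ instead shows that it depends only on $w_2$. A function of $w$ that depends only on $w_1$ and also only on $w_2$ must be constant in $(w_1,w_2)$, and the other coordinates never enter. Hence $I(W;Z_k)=0$.

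For the broadcast, fix $k=1$. From \eqref{eq:kernel}, $P_{X\mid W=w,D=d}$ depends only on $(d_1,w_{d_1})$. Using user $2$, it depends only on $(d_2,w_{d_2})$. Now compare two arbitrary pairs $(w,d)$ and $(w',d')$ through intermediate configurations. Changing $d_2$ and $w_j$ for $j\ne d_1$ leaves the distribution fixed, since it is determined by $(d_1,w_{d_1})$. Symmetrically, changing $d_1$ and $w_j$ for $j\neq d_2$ leaves it fixed.

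The path goes as follows. First set $d_2$ to an index $b\ne d_1$, and change $w_b$ freely along with all other coordinates except $w_{d_1}$. Then, with $d_2=b$ fixed, move $d_1$ and $w_{d_1}$ arbitrarily. Then reverse the roles of the two users. The remaining users' demands are irrelevant throughout. This uses $N,K\ge2$. The conclusion is that $P_{X\mid W,D}$ is constant, so $I(W,D;X)=0$, and in particular $I(W;X_d)=0$ for every $d$.

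I expect the main subtlety to be the chain-rule step in Step 1, where correctness is needed so that conditioning on $W_i$ preserves privacy. The connectivity argument in Step 2 is routine.
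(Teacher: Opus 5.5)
Your proposal is correct and follows the paper's proof. Step~1 is the same argument: correctness makes $W_i$ a function of the view given $D_k=i$, so privacy survives conditioning on $W_i$. The cache step uses the same device of a law that depends only on $w_1$ and also only on $w_2$. For the broadcast, you merge the paper's two stages into one connectivity path driven by users $1$ and $2$. The paper first shows constancy in $w$ at a demand vector with two distinct requests, then chains single-user demand changes through an unchanged user's kernel. Your path closes in three moves: fix $(d_1,w_{d_1})$ and reset $d_2\ne d_1$ and the other files; then fix $(d_2,w_{d_2})$ and reset $d_1,w_{d_1}$; then fix $(d_1,w_{d_1})$ again and reset $d_2$.
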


Equation~\eqref{eq:kernel} says that a user's observation can be simulated from the user's own requested index and file. A \emph{simulator} is a randomized procedure that reproduces the observation distribution using only those inputs. External secrecy of the broadcast in~\eqref{eq:independences} was established in~\cite[Th.~1]{qi2022}. Appendix~\ref{app:views} proves the representation and both independence identities.

To measure the uncertainty remaining after decoding, we subtract the requested file's $B$ bits from the conditional cache and broadcast entropies. The independence identities above make these deductions exact.

\begin{lemma}[Residual entropy bounds]\label{lem:budgets}
Fix a demand vector $d\in[N]^K$ and a user $k\in[K]$, and let $i=d_k$. Define $\delta=M-1$ and $\varepsilon=R-1$. Every feasible code satisfies $M\ge1$, $R\ge1$, and
\begin{align}
 H(Z_k\mid W_i,X_d)&\le\delta B,\label{eq:cache-budget}\\
 H(X_d\mid Z_k,W_i)&\le\varepsilon B.\label{eq:delivery-budget}
\end{align}
More precisely,
\begin{align}
 H(Z_k\mid W_i,X_d)&=H(Z_k\mid X_d)-B,\label{eq:cache-exact}\\
 H(X_d\mid Z_k,W_i)&=H(X_d\mid Z_k)-B.\label{eq:delivery-exact}
\end{align}
\end{lemma}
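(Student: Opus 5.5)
The plan is to prove the two exact identities \eqref{eq:cache-exact} and \eqref{eq:delivery-exact} first, and then obtain the inequalities by bounding the unconditioned entropies through the alphabet sizes. Both identities follow from one computation. The triple $(Z_k,X_d,W_i)$ contains $W_i$ as a function of $(Z_k,X_d)$, by correctness. Also, $W_i$ is independent of each of $Z_k$ and $X_d$ separately, by Lemma~\ref{lem:views}.

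\textbf{Step 1 (fixed-demand correctness).} Since every demand vector has positive probability, \eqref{eq:correct} gives $H(W_{d_k}\mid Z_k,X_d)=0$ for each fixed $d$. Here we use that, conditioned on $D=d$, the pair $(Z_k,X)$ equals $(Z_k,X_d)$, because $D$ is independent of $(W,S)$.

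\textbf{Step 2 (chain rule).} Expand $H(Z_k,X_d,W_i)$ in two ways:
\[
H(Z_k\mid X_d)+H(X_d)=H(Z_k,X_d)=H(Z_k,X_d,W_i)=H(W_i,X_d)+H(Z_k\mid W_i,X_d).
\]
By \eqref{eq:independences}, $I(W_{[N]};X_d)=0$, so $H(W_i,X_d)=B+H(X_d)$. Subtracting gives \eqref{eq:cache-exact}.

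For the symmetric identity, we use $I(W_{[N]};Z_k)=0$ from \eqref{eq:independences}. This gives $H(W_i,Z_k)=B+H(Z_k)$. The same expansion then yields $H(X_d\mid Z_k,W_i)=H(X_d\mid Z_k)-B$, which is \eqref{eq:delivery-exact}.

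\textbf{Step 3 (budgets).} We bound $H(Z_k\mid X_d)\le H(Z_k)\le\log_2|\mathcal Z_k|\le MB$ and $H(X_d\mid Z_k)\le RB$. Then \eqref{eq:cache-exact} and \eqref{eq:delivery-exact} give \eqref{eq:cache-budget} and \eqref{eq:delivery-budget} with $\delta=M-1$ and $\varepsilon=R-1$. Nonnegativity of the left-hand sides forces $MB\ge B$ and $RB\ge B$, i.e., $M\ge1$ and $R\ge1$.

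\textbf{Step 4 (achievable closure).} For the closure, the bounds hold for finite codes with $M+\eta$ and $R+\eta$. Letting $\eta\to0$ transfers $M\ge1$ and $R\ge1$ to achievable pairs. The normalized residual bounds then hold with $\delta,\varepsilon$ up to vanishing slack, as the paper's convention states.

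The only delicate point is Step~1 together with the use of Lemma~\ref{lem:views}. We must ensure that the independence $I(W_{[N]};X_d)=0$ is stated for the fixed-demand broadcast, which the lemma provides explicitly. We must also ensure that $Z_k$ does not depend on $d$, which holds because placement precedes demands. With those in hand, everything else is routine bookkeeping.
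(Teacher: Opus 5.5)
Your proof is correct and follows the paper's argument: correctness together with $H(W_i\mid X_d)=H(W_i\mid Z_k)=B$ from Lemma~\ref{lem:views}, two expansions of $H(Z_k,X_d,W_i)$, and then the alphabet bounds $H(Z_k)\le MB$ and $H(X_d)\le RB$ with nonnegativity. Your Steps~1 and~4 only make explicit two things the paper leaves implicit: the passage from the random-demand correctness condition to a fixed $d$, and the achievable-closure convention.
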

\begin{proof}
Correctness gives $H(W_i\mid Z_k,X_d)=0$. By Lemma~\ref{lem:views}, $H(W_i\mid X_d)=H(W_i\mid Z_k)=B$. Expanding the relevant joint entropies yields~\eqref{eq:cache-exact} and~\eqref{eq:delivery-exact}. Bounding $H(Z_k)\le MB$ and $H(X_d)\le RB$ gives~\eqref{eq:cache-budget}--\eqref{eq:delivery-budget}. Nonnegativity gives $M,R\ge1$.
\end{proof}

%% file: sections/03_achievability.tex
\section{A Difference-Sharing Scheme}\label{sec:achievability}
To reduce the cache space occupied by file shares, we share only the $N-1$ differences $W_1-W_N,\ldots,W_{N-1}-W_N$. Secret sharing hides the file contents in each user's cached shares. Independent random vectors hide the requests, and cached correction terms turn each decoded difference into the requested file.

\begin{theorem}[Achievable tradeoff]\label{thm:achievable}
For every $N,K\ge2$ and $t\in\{0,\ldots,K-1\}$, the pair
\begin{equation}\label{eq:points}
 M_t=1+\frac{(N-1)t}{K-t},\qquad R_t=\frac K{t+1}
\end{equation}
is achievable with joint file and demand privacy. Let $U_{N,K}(M)$ be the lower convex envelope of these points, extended by $1$ for $M\ge1+(N-1)(K-1)$. Then $\Rfd(M)\le U_{N,K}(M)$.
\end{theorem}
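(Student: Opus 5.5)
The plan is to build, for each fixed $t$, a shared-link scheme in the style of the secret-sharing/private-key (SPK) construction of~\cite{qi2024}. It is applied to the $N-1$ differences $V_n=W_n-W_N$, $n\in[N-1]$, over a large field $\F$, with the convention $V_N=0$; splitting $W_N$ into pieces with matching indices is treated below. The time-sharing and envelope claims will then follow by memory sharing.

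\textbf{Share sizes and the $(M_t,R_t)$ count.} Fix $t$. Each $V_n$ is encoded with a linear secret-sharing code into $\binom{K}{t}$ shares $S_{n,\mathcal T}$, indexed by $t$-subsets $\mathcal T\subseteq[K]$, each of size $B/\binom{K-1}{t}$. The code must satisfy two properties. First, the $\binom{K-1}{t-1}$ shares with $\mathcal T\ni k$ are jointly uniform and independent of $V_n$. Second, those shares together with the $\binom{K-1}{t}$ shares with $\mathcal T\not\ni k$ determine $V_n$ linearly. A Vandermonde/MDS code with $\binom{K-1}{t-1}$ uniform key symbols does this. User $k$ caches $S_{n,\mathcal T}$ for all $n\in[N-1]$ and all $\mathcal T\ni k$. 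This costs $(N-1)\binom{K-1}{t-1}/\binom{K-1}{t}=(N-1)t/(K-t)$ file units.

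The remaining unit of memory holds two things. The first is a correction term $C_k=W_N+T_k$, where $T_k$ is a fresh uniform pad of $B$ symbols. The second is a set of privacy keys of the SPK type: a uniformly random cyclic shift or permutation index $p_k\in[N]$, together with key symbols used to mask the $N-1$ stored share families. The keys cost $O(1)$ symbols in total and are therefore absorbed by the slack $\eta$.

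Delivery sends one coded symbol for each $(t+1)$-subset $\mathcal S$. This symbol is the sum over $k\in\mathcal S$ of a share $\tilde S_{k,\mathcal S\setminus\{k\}}$ intended for $k$. The share is taken from the family selected by the public index $d_k\oplus p_k$, combined with the stored keys, so that user $k$ can cancel every other term from its cache. The rate is therefore $\binom{K}{t+1}/\binom{K-1}{t}=K/(t+1)$.

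\textbf{Placing $T_k$ and recovering $W_{d_k}$.} The masks are arranged so that user $k$ recovers shares of $V_{d_k}-T_k$ rather than of $V_{d_k}$. Concretely, $T_k$ is split into $\binom{K-1}{t}$ pieces indexed by the $t$-subsets $\mathcal T\not\ni k$, and the piece for $\mathcal T=\mathcal S\setminus\{k\}$ is subtracted inside the $\mathcal S$-transmission. User $k$ then reconstructs $V_{d_k}-T_k$ linearly and adds $C_k$ to obtain $V_{d_k}+W_N=W_{d_k}$. The case $d_k=N$ is included because $V_N=0$; this is also why repeated demands need no special treatment. I would write the decoder explicitly and check correctness for all $d\in[N]^K$.

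\textbf{Privacy, the main obstacle.} Privacy is where I expect the real work. I would prove~\eqref{eq:jointprivacy} through the kernel criterion of Lemma~\ref{lem:views}: exhibit a simulator that, given only $(d_k,w_{d_k})$, outputs $(Z_k,X_d)$ with the correct distribution.

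The steps are as follows. (i) The cached shares are uniform and independent of $W$ by the secret-sharing threshold property. $C_k$ is uniform because of $T_k$, and the key and shift $p_k$ are uniform. (ii) Every transmission component intended for another user $j\ne k$ is masked by key or pad material that user $k$ never sees, so, conditioned on $Z_k$, these components are jointly uniform. Establishing this requires a rank argument showing that the masks of distinct $j$ are linearly independent of $k$'s view. (iii) The components intended for $k$ are determined by $Z_k$ together with the uniform shares of $V_{d_k}-T_k$, and the latter are consistent only with the single value $w_{d_k}=C_k+(V_{d_k}-T_k)$. (iv) The public indices $d_j\oplus p_j$ are uniform and independent for all $j$, including $j=k$, whose shift is known to user $k$ anyway.

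The delicate point is step (ii). The proof must verify, by an explicit full-rank condition on the linear map from (keys, pads, shares) to user $k$'s observation, that no combination of transmitted symbols leaks a linear function of some $V_n$ with $n\ne d_k$ or of $W_N$ beyond what $W_{d_k}$ already reveals. If the straightforward choice of masks fails this check, my first fallback is to give each $(k,\mathcal S)$ pair an independent key symbol. This raises the memory only by $O(1)$ symbols per user, which vanishes after normalization.

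\textbf{Envelope.} Given the points $(M_t,R_t)$, the bound $\Rfd(M)\le U_{N,K}(M)$ follows by splitting each file into two parts with rational proportions and running two schemes with independent seeds. Joint privacy is preserved because the two views are conditionally independent and each admits a simulator. Irrational proportions are handled by the closure in the definition of achievability. For $M\ge1+(N-1)(K-1)=M_{K-1}$, rate $1$ is achieved already by $t=K-1$, and extra memory can simply be left unused.
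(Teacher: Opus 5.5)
The envelope argument via memory sharing is fine. The construction itself, however, does not close, and the failure is in the memory accounting.

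\textbf{The privacy term costs a full file unit, not $O(1)$.} For a user $j\in\mathcal S\setminus\{k\}$ to cancel the component meant for $k$ without learning $d_k$, that component must be computable from $j$'s shares on row $\mathcal S\setminus\{k\}$ and public data. In a linear design it has the form $\sum_n q_{k,n}S_{n,\mathcal S\setminus\{k\}}$ with $q_k=s_k+e_{d_k}$ public. User $k$ must then remove $\sum_n s_{k,n}S_{n,T}$ for every missing row $T\in\cJ_k$, and it holds none of those shares. The vector $s_k$ is $O(1)$ symbols, but these $h$ products are packet-sized, so they cost a full file unit. This is the ``$1$'' in the SPK memory $1+Nt/(K-t)$. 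A shift $p_k\in[N]$ cannot replace them. Your fallback of one key per $(k,\mathcal S)$ pair is likewise $h$ packets, not $O(1)$. Your budget of one unit for $C_k=W_N+T_k$ plus $O(1)$ for keys is therefore one file unit short of $M_t$.

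\textbf{The per-user pad breaks decoding.} For $t\ge1$, the piece of $T_k$ inserted in $Y_{\mathcal S}$ must be cancelled by every other $j\in\mathcal S$. But $T_k$ appears in no cache other than $Z_k$, so those users cannot remove it. Privacy step (ii) has the same tension. A component inside a symbol that $k$ decodes must be cancellable by $k$, so it cannot be masked by material $k$ never sees. Masking has to be per symbol, by a key known only to the users served by that symbol.

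\textbf{The missing idea is the paper's aligned mask.} It puts the privacy term and the reference-file correction into a single packet per missing row, hidden by the subset key that also masks the broadcast:
$E_{k,T}=T_{T\cup\{k\}}+\sum_{n}s_{k,n}S_{n,T}-a_{k,T}W_N$. Here $a_{k,T}$ is row $T$ of $A_k=P_{\cJ_k}-Q_{\cJ_k}Q_{\cI_k}^{-1}P_{\cI_k}$, and the broadcast is $Y_A=T_A+\sum_{j\in A}\sum_n q_{j,n}S_{n,A\setminus\{j\}}$.

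\textbf{Why it works.} Every user in $A$ cancels $T_A$ with its own mask. After subtracting $Q_{\cJ_k}Q_{\cI_k}^{-1}S_{d_k,\cI_k}$ (omitted when $d_k=N$), the decoded rows equal $A_kW_{d_k}$, for $d_k<N$ and $d_k=N$ alike. The memory is exactly $(N-1)z+h$ packets, which gives $M_t$.

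\textbf{Privacy.} The same structure drives the privacy proof. The masks and the broadcasts with $k\notin A$ carry distinct keys, so the map to them is block-triangular and invertible. The broadcasts with $k\in A$ are then reconstructed from this base view, $\boldsymbol q$, and $W_{d_k}$.
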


\subsection{Difference Shares and Local Elimination}
Fix $t\in\{0,\ldots,K-1\}$. Let $F$ be the number of shares for each file difference, $z$ the number of those shares stored by each user, and $h$ the number of data packets in each file. With the convention $z=0$ when $t=0$, set
\begin{equation}\label{eq:dimensions}
 \begin{aligned}
 F&=\binom Kt,\qquad z=\binom{K-1}{t-1},\\
 h&=F-z=\binom{K-1}{t}.
 \end{aligned}
\end{equation}
Let $q$ be a prime power, chosen sufficiently large for the matrices below, and let $\F_q$ be the field with $q$ elements. Fix a positive integer $L$. Represent each file as a column of $h$ independent packets in $\F_q^L$. Matrix multiplication is applied separately to each of the $L$ field coordinates. A packet has $L\log_2 q$ bits, and a file has $B=hL\log_2 q$ bits. For binary files, choose $q$ to be a power of two. If the file length is not divisible by $h\log_2q$, padding requires fewer than $h\log_2q$ additional bits. The normalized padding overhead vanishes as the file length grows.

Choose fixed, public matrices $P\in\F_q^{F\times h}$ and $Q\in\F_q^{F\times z}$ such that $[P\ Q]$ is invertible and every $z$ rows of $Q$ form an invertible matrix. A Vandermonde matrix gives this choice over a field with at least $F$ elements: assign its first $z$ polynomial degrees to $Q$ and its remaining degrees to $P$. Polynomial evaluation is the basis of Shamir's secret-sharing scheme~\cite{shamir1979}. Here the first $z$ coefficients carry randomness and the remaining $h$ coefficients carry data. The resulting ramp-sharing scheme hides the data in any $z$ shares and recovers the data from all $F$ shares, as in private caching~\cite{private2018}.

Index the $F$ rows of $P$ and $Q$ by the $t$-subsets $T$ of $[K]$, in a fixed public order. The subscript $\mathcal H$ on a matrix or a column of packets selects the rows with labels in $\mathcal H$, keeping that order. Use $W_N$ as the reference file. For each $n\in[N-1]$, let $U_n$ be a column of $z$ independent uniform packets in $\F_q^L$. Choose all these packets independently across $n$ and independently of the library. Define the column of difference shares by
\begin{equation}\label{eq:shares}
 S_n=P(W_n-W_N)+QU_n.
\end{equation}
Write $S_{n,T}$ for the packet in row $T$ of $S_n$. User $k$ stores the shares whose row label contains $k$. Define its cached and missing row sets by
\begin{equation}
 \begin{aligned}
 \cI_k&=\{T\subseteq[K]:|T|=t,\ k\in T\},\\
 \cJ_k&=\{T\subseteq[K]:|T|=t,\ k\notin T\}.
 \end{aligned}
\end{equation}
Thus $|\cI_k|=z$ and $|\cJ_k|=h$. Conditional on any fixed library, the cached shares are a fixed shift of $Q_{\cI_k}U_n$. Invertibility of $Q_{\cI_k}$ makes this column uniform. Independence of the columns $U_n$ makes all cached share packets jointly uniform across $n<N$. When $t=0$, $\cI_k$ is empty and there are no cached share packets.

To recover a file difference from the cached and delivered shares, we first eliminate the sharing randomness. When $z=0$, take the inverse of the empty $0\times0$ matrix to be empty. For user $k$, define the public matrices $C_k\in\F_q^{h\times z}$ and $A_k\in\F_q^{h\times h}$ by
\begin{equation}\label{eq:schur}
 C_k=Q_{\cJ_k}Q_{\cI_k}^{-1},\qquad
 A_k=P_{\cJ_k}-C_kP_{\cI_k}.
\end{equation}
The rows of $C_k$ and $A_k$ retain the labels $T\in\cJ_k$. Let $a_{k,T}$ denote row $T$ of $A_k$. When $z=0$, $C_k$ is the $h\times0$ empty matrix and $C_kP_{\cI_k}$ is the zero matrix. For every $n\in[N-1]$, eliminating the sharing randomness gives
\begin{equation}\label{eq:local-elimination}
 S_{n,\cJ_k}-C_kS_{n,\cI_k}=A_k(W_n-W_N).
\end{equation}
The matrix $A_k$ is invertible. To verify this, let $v\in\F_q^h$ satisfy $A_kv=0$, and set $u=-Q_{\cI_k}^{-1}P_{\cI_k}v\in\F_q^z$. The $\cI_k$ rows of $Pv+Qu$ vanish by the definition of $u$. The $\cJ_k$ rows vanish because $A_kv=0$. Invertibility of $[P\ Q]$ then gives $v=0$.

\subsection{Placement and Delivery}
To hide requests in the delivery coefficients, give user $k$ a private vector $s_k$, uniform on $\F_q^{N-1}$. Write $s_k=(s_{k,1},\ldots,s_{k,N-1})^{\mathsf T}$. Choose these vectors independently across users and independently of the library and sharing randomness. For every $(t+1)$-subset $A$ of $[K]$, also generate a uniform key packet $T_A\in\F_q^L$. The key packets are mutually independent and independent of all the other variables just defined.

To recover the requested file from its difference, each cache must also supply the appropriate reference-file component. For every missing row $T\in\cJ_k$, user $k$ therefore stores a masked correction using the same row $a_{k,T}$ that appears in the decoded difference. We call this packet an \emph{aligned mask}. User $k$ caches its private vector $s_k$, the shares
\begin{equation}
 (S_{n,T}:n\in[N-1],\ T\in\cI_k),
\end{equation}
and, for each $T\in\cJ_k$, the aligned mask
\begin{equation}\label{eq:aligned-mask}
 E_{k,T}=T_{T\cup\{k\}}+\sum_{n=1}^{N-1}s_{k,n}S_{n,T}
              -a_{k,T}W_N.
\end{equation}
The independent key $T_{T\cup\{k\}}$ hides the two data-dependent terms in $E_{k,T}$. During decoding, subtracting $E_{k,T}$ adds $a_{k,T}W_N$ and thus supplies the reference-file correction.

Let $e_1,\ldots,e_{N-1}$ be the standard basis vectors of $\F_q^{N-1}$ and set $e_N=0$. For a fixed demand vector $d\in[N]^K$, define the public control vectors $q_k=(q_{k,1},\ldots,q_{k,N-1})^{\mathsf T}$ by
\begin{equation}\label{eq:controls}
 q_k=s_k+e_{d_k},\qquad k\in[K].
\end{equation}
Write $\boldsymbol q=(q_1,\ldots,q_K)$ for the tuple of public controls. The server broadcasts $\boldsymbol q$ and, for every $(t+1)$-subset $A$ of $[K]$, the coded packet
\begin{equation}\label{eq:delivery}
 Y_A=T_A+\sum_{j\in A}\sum_{n=1}^{N-1}
              q_{j,n}S_{n,A\setminus\{j\}}.
\end{equation}
The full transmission is $X_d=(\boldsymbol q,Y_A:|A|=t+1)$. Since the vectors $s_k$ are independent and uniform, the controls $q_k$ are also independent and uniform for every fixed demand vector.

\subsection{Decoding}
To decode the packet intended for user $k$ in a transmission $Y_A$ with $k\in A$, put $T=A\setminus\{k\}$. For every $j\in A\setminus\{k\}$, the row label $A\setminus\{j\}$ contains $k$. User $k$ therefore has all the shares in the summand indexed by $j$ in~\eqref{eq:delivery}. It subtracts these summands using the public controls, then subtracts its cached mask $E_{k,T}$. The key $T_A$ cancels, leaving
\begin{equation}\label{eq:decoding-first}
 \sum_{n=1}^{N-1}(q_{k,n}-s_{k,n})S_{n,T}+a_{k,T}W_N.
\end{equation}
If $d_k=n<N$, the packet in~\eqref{eq:decoding-first} equals $S_{n,T}+a_{k,T}W_N$. User $k$ collects these packets over all $h$ missing rows and subtracts $C_kS_{n,\cI_k}$. By~\eqref{eq:local-elimination}, it obtains
\begin{equation}\label{eq:decoded-transform}
 A_k(W_n-W_N)+A_kW_N=A_kW_n.
\end{equation}
If $d_k=N$, the share term in~\eqref{eq:decoding-first} is zero. The packets for the $h$ missing rows give $A_kW_N$ directly. For either type of request, user $k$ inverts $A_k$ to recover its file. The cancellation argument applies to each of the $t+1$ users in $A$, so every coded packet serves all those users.

\subsection{A Simulator for the Complete User View}
To prove joint privacy, we construct a simulator that reproduces the user's observation using only its requested index and file. Fix user $k\in[K]$ and its requested index $i\in[N]$. First condition on $D_k=i$ and a fixed value of the public control tuple $\boldsymbol q$. The private vector of user $k$ is then determined as $s_k=q_k-e_i$.

The base observation consists of three groups of packets: the cached shares, the cached aligned masks, and the broadcasts whose labels exclude $k$. Denote their tuple by
\begin{equation}\label{eq:base-view}
 \begin{aligned}
 \mathcal B_k=\bigl(&S_{n,\cI_k}:n<N;\ E_{k,T}:T\in\cJ_k;\\
                         &Y_A:|A|=t+1,\ k\notin A\bigr).
 \end{aligned}
\end{equation}
To determine the distribution of $\mathcal B_k$, fix the full library and demand vector as well. The packets in $(U_n:n<N;\,T_A:|A|=t+1)$ remain independent and uniform under this conditioning, since the controls depend only on the demands and the independent vectors $s_k$.

Order the keys so that those used in the cached masks come first, in the order of those masks. Follow them with the keys used in the broadcasts indexed by sets that exclude $k$, in the order of those broadcasts. For a nonnegative integer $m$, let $I_m$ denote the $m\times m$ identity matrix. Let $\otimes$ denote the Kronecker product, and use $*$ for the block of coefficients of the sharing randomness in the masks and broadcasts. The map from the random packets to $\mathcal B_k$ is affine, with linear part
\begin{equation}\label{eq:base-bijection}
 \begin{bmatrix}
 I_{N-1}\otimes Q_{\cI_k} & 0\\
 * & I_{\binom K{t+1}}
 \end{bmatrix}.
\end{equation}
Each cached mask uses one key whose label contains $k$. Each broadcast included in $\mathcal B_k$ uses a different key whose label excludes $k$. Thus every key occurs in exactly one of the mask or broadcast entries of $\mathcal B_k$. The two diagonal blocks in~\eqref{eq:base-bijection} are invertible, so the full map is invertible and sends independent uniform input packets to independent uniform output packets. The conditional distribution of $\mathcal B_k$ is consequently the same for every library and every demand vector consistent with $D_k=i$ and the fixed controls $\boldsymbol q$.

It remains to reconstruct the broadcasts whose labels contain $k$. They are determined by $\mathcal B_k$, $\boldsymbol q$, and $W_i$. Write $(e_i)_n$ for coordinate $n$ of $e_i$. For $T\in\cJ_k$, the decoding identities give
\begin{align}\label{eq:reconstruct-view}
 Y_{T\cup\{k\}}={}&E_{k,T}+a_{k,T}W_i\\
       &+\left(C_k\sum_{n=1}^{N-1}(e_i)_nS_{n,\cI_k}\right)_T\notag\\
 &+\sum_{j\in T}\sum_{n=1}^{N-1}q_{j,n}
                 S_{n,(T\cup\{k\})\setminus\{j\}}.\notag
\end{align}
The sum inside $C_k$ selects $S_{i,\cI_k}$ when $i<N$ and is zero when $i=N$, since $e_N=0$. Every share on the last line has its row label in $\cI_k$ and hence is included in $\mathcal B_k$.

Given only $(i,W_i)$, the simulator draws independent uniform control vectors to form $\boldsymbol q$, sets $s_k=q_k-e_i$, and draws all packets in $\mathcal B_k$ independently and uniformly. It then generates the remaining broadcasts from~\eqref{eq:reconstruct-view}. For every fixed library $w$ and demand vector $d$ with $d_k=i$, this procedure reproduces the distribution of $(Z_k,X_d)$ using only $(i,w_i)$. It establishes~\eqref{eq:kernel} for the constructed code and hence proves~\eqref{eq:jointprivacy}.

\subsection{Memory, Delivery Rate, and General Cache Sizes}
User $k$ stores $(N-1)z$ share packets and $h$ aligned-mask packets. The server sends $\binom K{t+1}$ coded packets. Dividing by the file size of $h$ packets gives
\begin{align}
 M_t&=\frac{(N-1)z+h}{h}=1+\frac{(N-1)t}{K-t},\\
 R_t&=\frac{\binom K{t+1}}h=\frac K{t+1}.
\end{align}
The private vector in each cache adds $(N-1)\log_2q$ bits. The public control tuple adds $K(N-1)\log_2q$ bits to the transmission. After normalization, these contributions are $(N-1)/(hL)$ and $K(N-1)/(hL)$, respectively, and vanish as $L$ grows. Thus the packet and control counts establish the corner points in Theorem~\ref{thm:achievable} under the closure in the definition of achievability.

To obtain intermediate cache sizes, split each file into independent parts and apply the corner-point schemes with independent placement randomness. Conditional on the full library and demand vector, the observations from the component schemes are independent. Each component distribution depends only on the observing user's requested index and the corresponding part of the requested file. Their product therefore depends only on that requested index and the complete requested file, which preserves joint privacy. Varying the part sizes gives the lower convex envelope $U_{N,K}(M)$.

Write $a=N-1$. The slope between corner points $t$ and $t+1$ is
\begin{equation}\label{eq:s-slope}
 -\frac{(K-t)(K-t-1)}{a(t+1)(t+2)},\qquad 0\le t<K-1.
\end{equation}
The slopes increase with $t$, so the envelope linearly interpolates between consecutive corner points. At cache sizes above the last corner point, the same scheme with unit delivery rate can leave the extra memory unused.

For the constant-factor proof, we also need a simple upper bound that does not require evaluating the convex envelope. Fix $M>1$, and write $\delta=M-1$. Define the real-valued index $t_*$ and its integer rounding $t$ by
\begin{equation}\label{eq:integer-choice}
 t_* =\frac{K\delta}{a+\delta},\qquad t=\lfloor t_*\rfloor.
\end{equation}
Since $t\le t_*<K$, we have $t\in\{0,\ldots,K-1\}$ and $M_t\le M$. Moreover,
\begin{equation}\label{eq:upper-scale}
 U_{N,K}(M)\le\frac K{t+1}
       \le\min\left\{K,\,1+\frac a\delta\right\}.
\end{equation}
The second inequality uses $t+1>t_*$ and $K/t_*=1+a/\delta$. At the endpoint $M=1$, independent one-time-pad unicasts achieve $(1,K)$ exactly.

\subsection{A Three-User Example}
To illustrate the reference-file correction and the privacy simulator, take $N=K=3$ and $t=1$, so $(F,z,h)=(3,1,2)$. Arithmetic is over $\F_5$ to keep the coefficients small. Each file is a pair of independent uniform packets $W_n=(W_{n,1},W_{n,2})^{\mathsf T}$, with each packet in $\F_5^L$. In this example, use the index $j$ as shorthand for the singleton row label $\{j\}$. Let $x_1=1,x_2=2,x_3=3$ be the three fixed evaluation points. For $n=1,2$, draw a uniform packet $U_n\in\F_5^L$, independently of the library and of the other packet, and form
\begin{equation}\label{eq:s-example-shares}
 S_{n,j}=U_n+x_j(W_{n,1}-W_{3,1})
                   +x_j^2(W_{n,2}-W_{3,2}).
\end{equation}
Write $\boldsymbol S_j=(S_{1,j},S_{2,j})^{\mathsf T}$ for the two shares with row index $j$. User $k$ caches $\boldsymbol S_k$. Conditional on any library, the two-packet vector $\boldsymbol S_k$ is uniform because $(U_1,U_2)$ is uniform. A user's cached shares alone therefore reveal no information about the library.

Under the singleton-label shorthand, the cached row set is $\cI_k=\{k\}$ and the missing rows are the other two indices in increasing order. For each user, $C_k=(1,1)^{\mathsf T}$.

\begin{remark}[Local transforms over $\F_5$]\label{rem:example-transforms}
The rows of $A_1$, $A_2$, and $A_3$ correspond to the missing indices $(2,3)$, $(1,3)$, and $(1,2)$, respectively. The transforms and their determinants are
\begin{align*}
 A_1&=\begin{bmatrix}1&3\\2&3\end{bmatrix},&\det A_1&=2,\\
 A_2&=\begin{bmatrix}4&2\\1&0\end{bmatrix},&\det A_2&=3,\\
 A_3&=\begin{bmatrix}3&2\\4&0\end{bmatrix},&\det A_3&=2.
\end{align*}
For instance, subtracting the coefficient row at $x_1$ from those at $x_2$ and $x_3$ gives the rows $(1,3)$ and $(2,3)$ of $A_1$.
\end{remark}

For compact notation, write $T_{ab}=T_{\{a,b\}}$ and $Y_{ab}=Y_{\{a,b\}}$ when $a<b$. Draw independent uniform key packets $T_{12},T_{13},T_{23}\in\F_5^L$ and independent uniform vectors $s_k\in\F_5^2$, all independently of the library and sharing randomness. Let $a_{k,j}$ be the row of $A_k$ with missing index $j$, and define
\begin{equation}\label{eq:s-example-cache}
 \begin{aligned}
 E_{k,j}&=T_{\{k,j\}}+s_k^{\mathsf T}\boldsymbol S_j-a_{k,j}W_3,\\
 Z_k&=(s_k,\boldsymbol S_k,E_{k,j}:j\ne k).
 \end{aligned}
\end{equation}
Set $e_1=(1,0)^{\mathsf T}$, $e_2=(0,1)^{\mathsf T}$, and $e_3=0$. For demand vector $d$, the server publishes the controls $q_k=s_k+e_{d_k}$ and sends
\begin{align}\label{eq:s-example-delivery}
 Y_{12}&=T_{12}+q_1^{\mathsf T}\boldsymbol S_2+q_2^{\mathsf T}\boldsymbol S_1,\\
 Y_{13}&=T_{13}+q_1^{\mathsf T}\boldsymbol S_3+q_3^{\mathsf T}\boldsymbol S_1,\\
 Y_{23}&=T_{23}+q_2^{\mathsf T}\boldsymbol S_3+q_3^{\mathsf T}\boldsymbol S_2.
\end{align}
Each packet serves its two indexed users. For user $k$ and $j\ne k$, cancellation gives
\begin{equation}\label{eq:s-example-cancellation}
 Y_{\{k,j\}}-q_j^{\mathsf T}\boldsymbol S_k-E_{k,j}
       =e_{d_k}^{\mathsf T}\boldsymbol S_j+a_{k,j}W_3.
\end{equation}
If $d_k=n<3$, subtract the cached packet $S_{n,k}$ from the right side. Equation~\eqref{eq:s-example-shares} then gives $a_{k,j}W_n$. If $d_k=3$, the share term is zero and the right side already equals $a_{k,j}W_3$. Collecting the packets for the two missing indices gives $A_kW_{d_k}$. User $k$ recovers its file by inverting $A_k$, whose determinant is nonzero by Remark~\ref{rem:example-transforms}.

For user 1, condition on the public control tuple $\boldsymbol q$ and $D_1=i$. The five packets in
\begin{equation}\label{eq:s-example-base}
 (\boldsymbol S_1,E_{1,2},E_{1,3},Y_{23})
\end{equation}
are jointly uniform, with the same distribution for every library and every pair of other-user demands. The random packets $U_1,U_2$ make the two packets in $\boldsymbol S_1$ uniform. Conditional on those packets and any fixed library and demand vector, the independent keys $T_{12},T_{13},T_{23}$ make $E_{1,2},E_{1,3},Y_{23}$ independent and uniform. The broadcasts $Y_{12}$ and $Y_{13}$ are determined by the tuple in~\eqref{eq:s-example-base}, $W_i$, and $\boldsymbol q$ through the rearranged decoding equations. Drawing the controls and the five base packets and then reconstructing those two broadcasts gives the simulator for user 1. Relabeling the users gives the simulators for users 2 and 3.

Each cache contains four packets, each of size $B/2$ bits, and two scalar control symbols. Delivery contains three packets and six control symbols. Since each file contains two packets, the finite-packet counts are
\begin{equation}
 M=2+\frac1L,\qquad R=\frac32+\frac3L.
\end{equation}
As $L$ grows, these approach $(M,R)=(2,3/2)$.

%% file: sections/04_converse.tex
\section{A Converse From User Groups and Alternative Demands}\label{sec:converse}
To derive a lower bound that accounts for both the file count and the user count, we compare several demand vectors under a fixed placement. One user, the anchor, always requests $W_1$. In each demand vector, one group of other users requests files from $W_2,\ldots,W_N$, while everyone outside the group requests $W_1$. The anchor's privacy keeps the group's requested files hidden from the anchor's observation. We then add the groups in sequence. Under each demand assigned to the current group, users in earlier groups request $W_1$. Their cache entropies can therefore be bounded after conditioning on $W_1$ and a broadcast for the current group.

All the random variables $X_d$ below are defined on a common probability space using the same library and placement randomness. The different $X_d$ represent alternative deliveries for one placement. The privacy conditions are applied separately to each delivery.

\subsection{General Group and Demand Profiles}
Set $a=N-1$ and $\delta=M-1$. Fix user 1 as the anchor. Choose a positive integer $g$, disjoint user groups $\cG_1,\ldots,\cG_g\subseteq[K]\setminus\{1\}$, and positive integers $\ell_1,\ldots,\ell_g$. The groups and their demand vectors are fixed choices. For each $j\in[g]$, let $r_j=|\cG_j|$ and require
\begin{equation}\label{eq:profile-constraints}
 r_j\ell_j\le a,\qquad \sum_{j=1}^g r_j\le K-1.
\end{equation}
For group $j$, use $\ell_j$ alternative demand vectors. Across the $\ell_j$ vectors, the group requests $r_j\ell_j$ different files from $W_2,\ldots,W_N$. Every user outside $\cG_j$ requests $W_1$. File indices may be reused in different groups. The group sizes and demand counts determine a profile. Define its coefficients by
\begin{align}\label{eq:profile-coefficients}
 \alpha&=\sum_{j=1}^g r_j\ell_j,\qquad
 \beta=\sum_{j=1}^g(g-j+1)r_j,\\
 \gamma&=1+\sum_{j=1}^g(\ell_j-1).\notag
\end{align}

\begin{theorem}[Grouped anchored bound]\label{thm:grouped}
Every jointly private code satisfies, for each profile obeying~\eqref{eq:profile-constraints},
\begin{equation}\label{eq:general-converse}
 R\ge1+\frac{\alpha-\beta(M-1)}\gamma.
\end{equation}
The bound holds for arbitrary randomized placement and delivery. It also holds for any correct code with fixed-demand file privacy $I(W_{[N]\setminus\{d_k\}};Z_k,X_d)=0$ for every user $k$ and demand vector $d$. External secrecy $I(W_{[N]};X_d)=0$ is also required for every $d$.
\end{theorem}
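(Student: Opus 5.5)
The plan is a telescoping entropy chain over the groups, all conditioned on the anchor's view. Fix the profile and, for each group $j$, its demand vectors $d^{j,1},\ldots,d^{j,\ell_j}$. Add one auxiliary vector $d^{0}$ in which every user requests $W_1$. Write $X^{j,m}=X_{d^{j,m}}$, $X^{0}=X_{d^0}$, $Z_{\le j}=Z_{\cG_1\cup\cdots\cup\cG_j}$, and let $\mathcal F_j$ be the $r_j\ell_j$ files requested by group $j$.

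I would use three ingredients.
\begin{enumerate}
\item \emph{Anchor secrecy.} In every vector the anchor requests $W_1$. Lemma~\ref{lem:views} (equivalently, fixed-demand file privacy for $k=1$ together with external secrecy) then gives $I(W_{2:N};Z_1,X^{j,m},W_1)=0$ for each single broadcast. Hence $H(\mathcal F_j\mid Z_1,W_1,X^{j,1})=r_j\ell_jB$, since the files in $\mathcal F_j$ are distinct by~\eqref{eq:profile-constraints}.
\item \emph{Decoding.} Under $d^{j,m}$ each user in $\cG_j$ decodes its file, so $(Z_{\cG_j},X^{j,1},\ldots,X^{j,\ell_j})$ determines $\mathcal F_j$.
\item \emph{Residual budgets.} By Lemma~\ref{lem:budgets}, applied to the anchor, $H(X\mid Z_1,W_1)\le\varepsilon B$ for every broadcast in the family. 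Applied to a user $u$ that requests $W_1$ under the vector generating $X$, it gives $H(Z_u\mid W_1,X)\le\delta B$.
\end{enumerate}
The third point is the reason for the nesting. Users of earlier groups request $W_1$ in every later group's vectors, as well as in $d^0$. Their caches can therefore be charged $\delta$ once at each later step, and the count $\sum_j(g-j+1)r_j=\beta$ is exactly this charging pattern.

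The steps are as follows.
\begin{enumerate}
\item \emph{Per-group inequality.} Conditioned on $(Z_1,W_1,X^{j,1})$, I would write
\[
r_j\ell_jB\le H\bigl(Z_{\cG_j},X^{j,2},\ldots,X^{j,\ell_j}\mid Z_1,W_1,X^{j,1}\bigr).
\]
I would then bound the $\ell_j-1$ extra broadcasts by $\varepsilon B$ each, and keep the cache term $H(Z_{\cG_j}\mid Z_1,W_1,X^{j,1},\ldots)$ as a carry.
\item \emph{Passing the carry forward.} The users of $\cG_j$ do not request $W_1$ under $d^{j,1}$, so the carry cannot be bounded by $\delta$ on the spot. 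Instead I would exchange $X^{j,1}$ for the next group's first broadcast $X^{j+1,1}$, or for $X^{0}$ after the last group. Via submodularity, the exchange costs at most $H(Z_{\le j}\mid Z_1,W_1,X^{j+1,1})\le\delta B\sum_{i\le j}r_i$, plus the mutual information that $X^{j,1}$ can carry beyond the caches.
\item \emph{Final step.} The last broadcast $X^{0}$ is charged $\varepsilon B$ once. This supplies the leading $1$ in $\gamma$, while the other groups' first broadcasts are absorbed by the exchanges.
\end{enumerate}
Summing over $j$ should give $\alpha B\le\gamma\varepsilon B+\beta\delta B$, which is~\eqref{eq:general-converse}. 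As a sanity check, with $g=1$, $r_1=\ell_1=1$ this reads $R\ge 2-M$, consistent with the corner point $(1,K)$ and with $M,R\ge1$.

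The main obstacle is the bookkeeping in step 2. The inequality chain must make exactly one broadcast per group (its first) cost nothing beyond the cache charges. That broadcast has to be identified with the conditioning broadcast of the next step, and the group's own caches must not be charged while their users request non-$W_1$ files.

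I would first try an induction on the quantity
\[
\Phi_j=H\bigl(Z_{\le j}\mid Z_1,W_1,X^{j+1,1}\bigr)-H\bigl(Z_{\le j}\mid Z_1,W_1,W_{2:N}\bigr).
\]
The subtracted term is nonnegative and is dropped at the end. The aim is an inequality of the form $\Phi_j\ge\Phi_{j-1}+r_j\ell_jB-(\ell_j-1)\varepsilon B$, proved by applying anchor secrecy to the single broadcast $X^{j,1}$. Each $\Phi_j$ is then bounded above by $\delta\sum_{i\le j}r_i\,B$ through the cache budget.

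Only per-demand conditions are used at any point: anchor privacy or external secrecy, and per-user budgets under a single broadcast. This matches the statement's claim that fixed-demand file privacy plus external secrecy suffice.
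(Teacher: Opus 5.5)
\textbf{There is a genuine gap in step~2, which you flag yourself but do not resolve.} Your three ingredients are correct, and so is your per-group inequality. It coincides with the first step of the paper's proof: anchor secrecy plus decoding give $H(Z_{\cG_j}\mid Z_1,W_1,V_j)\ge[r_j\ell_j-(\ell_j-1)\varepsilon]B$, where $V_j=(X^{j,1},\ldots,X^{j,\ell_j})$. Your count showing that $\beta$ is the total $\delta$-charge is also right.

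The problem is that exchanging the conditioning broadcast $X^{j,1}$ for another broadcast $X^{j+1,1}$ (or $X^0$) is not free. The ``mutual information that $X^{j,1}$ can carry beyond the caches'' is never bounded, and nothing supports the claim that the first broadcasts are ``absorbed''. Your concrete induction is in fact false.
\begin{itemize}
\item \emph{Counterexample.} Take $g=1$, $\cG_1=\{2\}$, $r_1=\ell_1=1$, and the one-time-pad unicast scheme: $Z_k$ is a uniform $B$-bit key $K_k$ and $X_d=(W_{d_k}+K_k)_{k\in[K]}$. This scheme is jointly private at $M=1$. Since $K_2$ is determined by $(W_1,X^0)$, you get $\Phi_1=0-H(K_2)=-B$. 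Your target instead demands $\Phi_1\ge\Phi_0+B=B$.
\item \emph{Consequence.} Combined with your correct upper bound $\Phi_g\le n\delta B$, the target would give $\alpha-(\gamma-1)\varepsilon\le n\delta$. In this profile that means $M\ge2$, contradicting the achievable point $(1,K)$.
\item \emph{Why no repair of this kind works.} In the same example, any carry conditioned on a broadcast under which the earlier users request $W_1$ is identically zero. So each exchange can lose the whole gain of its group. Charging that loss by $H(X^{j+1,1}\mid Z_1,W_1)\le\varepsilon B$ at every group replaces $\gamma$ by $g+\sum_j(\ell_j-1)$. For $N=2$, where all groups must be singletons, this yields only $R\ge2$ at $M=1$ instead of $R\ge K$.
\end{itemize}

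\textbf{The missing idea is to trade $V_j$ for the earlier groups' caches, not for another broadcast.}
\begin{itemize}
\item \emph{The exchange.} Use the identity $H(A\mid B',C)\ge H(A\mid C,V)-H(B'\mid C,V)$ with $A=Z_{\cG_j}$, $B'=Z_{\cG_{<j}}$, $C=(Z_1,W_1)$ and $V=V_j$. Writing $n_{j-1}=\sum_{i<j}r_i$, it gives $H(Z_{\cG_j}\mid Z_1,W_1,Z_{\cG_{<j}})\ge[r_j\ell_j-(\ell_j-1)\varepsilon-n_{j-1}\delta]B$.
\item \emph{Its cost.} The cost term $H(Z_{\cG_{<j}}\mid Z_1,W_1,V_j)$ is at most $n_{j-1}\delta B$, because the earlier users request $W_1$ under $X^{j,1}$.
\item \emph{Telescoping.} These increments contain no broadcast. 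By the chain rule they sum to $H(Z_{\cG_1},\ldots,Z_{\cG_g}\mid Z_1,W_1)$.
\item \emph{One broadcast at the end.} Only now is a broadcast introduced, and only once. The joint cache entropy is at most $H(X^0\mid Z_1,W_1)+H(Z_{\cG_1},\ldots,Z_{\cG_g}\mid Z_1,W_1,X^0)\le(\varepsilon+n\delta)B$.
\end{itemize}
Hence $\varepsilon$ is paid once and $\delta$ is paid $\sum_j n_{j-1}+n=\beta$ times, which is exactly the theorem. Separately, your sanity check has an arithmetic slip: the profile $g=1$, $r_1=\ell_1=1$ gives $R\ge3-M$, not $R\ge2-M$.
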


The coefficient $\alpha$ sums the number of distinct files requested within each group. A file may be counted again in another group. The coefficient $\beta$ counts how often cache entropy is subtracted as groups are added. The coefficient $\gamma$ counts the broadcast terms that remain after the lower and upper bounds on joint cache entropy are compared.

\subsection{Proof of Theorem~\ref{thm:grouped}}
Let $\varepsilon=R-1$. Denote by $X_0$ the broadcast when every user requests $W_1$. For each $j\in[g]$ and $l\in[\ell_j]$, let $X_{j,l}$ be the broadcast under the $l$-th demand vector assigned to group $j$. Collect these broadcasts in $V_j=(X_{j,1},\ldots,X_{j,\ell_j})$. Let $\mathcal A_j\subseteq\{2,\ldots,N\}$ contain the $r_j\ell_j$ file indices requested by group $j$ across its demand vectors.

\subsubsection{The information supplied by one group}
Under the first demand assigned to group $j$, the anchor requests $W_1$. File privacy keeps $W_{\mathcal A_j}$ independent of the anchor's view. Correctness allows us to condition additionally on $W_1$, because the anchor decodes that file from its view. Hence
\begin{equation}\label{eq:anchor-file-entropy}
 H(W_{\mathcal A_j}\mid Z_1,W_1,X_{j,1})=r_j\ell_j B.
\end{equation}
For $l\in\{2,\ldots,\ell_j\}$, each additional broadcast satisfies the bound in~\eqref{eq:delivery-budget}:
\begin{equation}
 H(X_{j,l}\mid Z_1,W_1)\le\varepsilon B.
\end{equation}
Adding the remaining $\ell_j-1$ broadcasts to the conditioning in~\eqref{eq:anchor-file-entropy} can therefore reduce the conditional entropy by at most $(\ell_j-1)\varepsilon B$ bits. Thus
\begin{equation}\label{eq:group-files}
 H(W_{\mathcal A_j}\mid Z_1,W_1,V_j)
 \ge\bigl[r_j\ell_j-(\ell_j-1)\varepsilon\bigr]B.
\end{equation}
By correctness, the cache tuple $Z_{\cG_j}$ and the broadcasts $V_j$ together determine $W_{\mathcal A_j}$. Consequently,
\begin{equation}\label{eq:group-cache}
 H(Z_{\cG_j}\mid Z_1,W_1,V_j)
 \ge\bigl[r_j\ell_j-(\ell_j-1)\varepsilon\bigr]B.
\end{equation}
This step uses correctness for several users and privacy only for the single anchor.

\subsubsection{Adding groups in sequence}
Let $\cG_{<j}=\bigcup_{i<j}\cG_i$ be the set of users in earlier groups, and let $n_{j-1}=\sum_{i<j}r_i$ be their number. For $j=1$, the set is empty and $n_0=0$. All users in $\cG_{<j}$ request $W_1$ under every demand assigned to group $j$. Since $V_j$ contains $X_{j,1}$, conditional subadditivity and~\eqref{eq:cache-budget} give
\begin{equation}\label{eq:previous-caches}
 H(Z_{\cG_{<j}}\mid Z_1,W_1,V_j)\le n_{j-1}\delta B.
\end{equation}
To replace $V_j$ in the conditioning by the earlier caches while bounding the entropy loss, we use the following identity for arbitrary random variables $A,B',C,V$:
\begin{align*}
&H(A\mid B',C)-H(A\mid C,V)+H(B'\mid C,V)\\
&\qquad=I(A;V\mid B',C)+H(B'\mid A,C,V)\ge0.
\end{align*}
Apply the identity with $A=Z_{\cG_j}$, $B'=Z_{\cG_{<j}}$, $C=(W_1,Z_1)$, and $V=V_j$. Together with~\eqref{eq:group-cache} and~\eqref{eq:previous-caches}, it yields
\begin{align}
 &H(Z_{\cG_j}\mid W_1,Z_1,Z_{\cG_{<j}})\notag\\
 &\quad\ge H(Z_{\cG_j}\mid W_1,Z_1,V_j)
              -H(Z_{\cG_{<j}}\mid W_1,Z_1,V_j)\notag\\
 &\quad\ge\bigl[r_j\ell_j-(\ell_j-1)\varepsilon
                       -n_{j-1}\delta\bigr]B.\label{eq:group-increment}
\end{align}
Summing~\eqref{eq:group-increment} over the groups gives a lower bound on the joint entropy of the anchor's cache and the group caches. Since the anchor's cache is independent of the library, $H(Z_1\mid W_1)=H(Z_1)$, and
\begin{align}\label{eq:cache-lower}
 &H(Z_1,Z_{\cG_1},\ldots,Z_{\cG_g}\mid W_1)\notag\\
 &\quad\ge H(Z_1)+\left[\alpha-(\gamma-1)\varepsilon
                          -\delta\sum_{j=1}^g n_{j-1}\right]B.
\end{align}
Larger groups contribute several independent files in one increment. The subtraction in~\eqref{eq:group-increment} depends on the number of users in earlier groups.

\subsubsection{An upper bound on the same cache entropy}
We now bound the same joint cache entropy from above using the demand in which every user requests $W_1$. Let $n=\sum_{j=1}^g r_j$ be the total number of users in the groups. For the anchor,~\eqref{eq:cache-exact} gives
\begin{equation}\label{eq:anchor-exact-cache}
 H(Z_1\mid W_1,X_0)\le H(Z_1)-B.
\end{equation}
After conditioning on $W_1$ and $X_0$, each of the other $n$ caches has at most $\delta B$ bits of entropy by~\eqref{eq:cache-budget}. Therefore
\begin{align}\label{eq:cache-upper}
 &H(Z_1,Z_{\cG_1},\ldots,Z_{\cG_g}\mid W_1)\notag\\
 &\quad\le H(X_0\mid W_1)+H(Z_1\mid W_1,X_0)+n\delta B\notag\\
 &\quad\le H(Z_1)+(\varepsilon+n\delta)B.
\end{align}
Comparing~\eqref{eq:cache-lower} and~\eqref{eq:cache-upper} cancels the actual entropy $H(Z_1)$. Since
\begin{equation}
 n+\sum_{j=1}^g n_{j-1}=\sum_{j=1}^g(g-j+1)r_j=\beta,
\end{equation}
we obtain $\gamma\varepsilon\ge\alpha-\beta\delta$, proving~\eqref{eq:general-converse}. The proof uses only fixed-demand file privacy, external library secrecy, and correctness. It therefore also establishes the final statement of the theorem.\hfill$\square$

\subsection{Many Groups, One Demand per Group}
We first obtain an explicit bound by assigning one demand vector to each group. To express the cache coefficient for this choice, define, for every nonnegative integer $n$,
\begin{equation}\label{eq:J-definition}
 J_a(n)=\sum_{i=1}^n\left\lceil\frac ia\right\rceil,
 \qquad J_a(0)=0.
\end{equation}
Write $n=ab+r$, where $b=\lfloor n/a\rfloor$ is a nonnegative integer and $r\in\{0,\ldots,a-1\}$. Then
\begin{equation}\label{eq:J-closed}
 J_a(n)=\frac{ab(b+1)}2+(b+1)r.
\end{equation}
For each integer $n\in\{1,\ldots,K-1\}$, select $n$ users other than the anchor and set every $\ell_j=1$. If $r>0$, place the first $r$ users in an initial group. Partition the remaining $ab$ users into $b$ groups of size $a$. The resulting profile has $\alpha=n$, $\gamma=1$, and $\beta=(b+1)r+a\sum_{i=1}^{b}i=J_a(n)$. Theorem~\ref{thm:grouped}, together with $R\ge1$ for the $n=0$ term, yields
\begin{equation}\label{eq:group-bound}
 \LB(M)=\max_{0\le n\le K-1}\{1+n-\delta J_a(n)\}
             \le\Rfd(M).
\end{equation}
The maximum is over integer values of $n$. For $\delta>0$, one maximizing choice is
\begin{equation}\label{eq:n-optimal}
 n=\min\{K-1,\ a\lfloor1/\delta\rfloor\}.
\end{equation}
Indeed, increasing $n$ from $i-1$ to $i$ changes the objective by $1-\delta\lceil i/a\rceil$. The increments are nonincreasing. The choice in~\eqref{eq:n-optimal} includes all nonnegative increments allowed by the user count.

The summands in~\eqref{eq:J-definition} consist of $a$ occurrences of 1, followed by $a$ occurrences of 2, and so on. This pattern reflects the fact that one group can include up to $a=N-1$ users requesting different files. For example, with $N=4$ and $K=8$, groups of sizes $1,3,3$ give $J_3(7)=12$ and
\begin{equation}
 R\ge8-12(M-1).
\end{equation}
Each three-user group requests the three files other than $W_1$. Those files can be reused for the next group. Under the next group's demand, the earlier users request $W_1$, so their caches have bounded entropy after conditioning on $W_1$ and the corresponding broadcast.

\subsection{One Group, Several Demands}
To obtain a complementary bound for larger values of $M-1$, we let one group request several batches of files. Set $s_{\max}=\min\{a,K-1\}$. For each integer $s\in\{1,\ldots,s_{\max}\}$, choose $g=1$, $r_1=s$, and $\ell_1=\lfloor a/s\rfloor$. Each batch contains $s$ different files, and the batches are disjoint. The choice of $\ell_1$ uses as many batches as the $a$ available files permit. For a real number $x$, write $[x]^+=\max\{x,0\}$. Define
\begin{equation}\label{eq:multidemand-bound}
 \LA(M)=1+\max_{1\le s\le s_{\max}}
 \left[s\left(1-\frac{\delta}{\lfloor a/s\rfloor}\right)\right]^+.
\end{equation}
By~\eqref{eq:general-converse} and $R\ge1$, we have $\LA(M)\le\Rfd(M)$. To compare the achievable delivery rate with both bounds, define
\begin{equation}\label{eq:L-definition}
 L(M)=\max\{\LB(M),\LA(M)\}.
\end{equation}

%% file: sections/05_gap.tex
\section{\texorpdfstring{A Uniform Factor-$9/2$ Guarantee}{A Uniform Factor-9/2 Guarantee}}\label{sec:gap}
To prove a uniform bound on the ratio between the achievable and optimal delivery rates, we compare both with a common scale. Define
\begin{equation}\label{eq:Phi}
 \Phi_{N,K}(M)=
 \begin{cases}
 K,&M=1,\\[1mm]
 \displaystyle\min\left\{K,1+\frac{N-1}{M-1}\right\},&M>1.
 \end{cases}
\end{equation}
The term $K$ is the delivery rate of independently encrypted unicast. For $M>1$, the term $1+(N-1)/(M-1)$ bounds the delivery rate obtained by the integer choice in~\eqref{eq:integer-choice}.

\begin{theorem}[Order optimality for all parameters]\label{thm:gap}
For all integers $N,K\ge2$ and every $M\ge1$,
\begin{equation}\label{eq:main-chain}
 \begin{aligned}
 \frac29\Phi_{N,K}(M)&\le L(M)\le\Rfd(M),\\
 \Rfd(M)&\le U_{N,K}(M)\le\Phi_{N,K}(M).
 \end{aligned}
\end{equation}
In particular,
\begin{equation}\label{eq:gap}
 1\le\frac{U_{N,K}(M)}{\Rfd(M)}\le\frac92.
\end{equation}
All constants are independent of $N$, $K$, and $M$.
\end{theorem}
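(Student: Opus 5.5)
The upper half of~\eqref{eq:main-chain} is already in hand. For $M>1$, \eqref{eq:upper-scale} gives $\Rfd(M)\le U_{N,K}(M)\le\min\{K,1+a/\delta\}=\Phi_{N,K}(M)$, and at $M=1$ the unicast point $(1,K)$ gives $U_{N,K}(1)=K=\Phi_{N,K}(1)$. The inequality $L(M)\le\Rfd(M)$ is~\eqref{eq:group-bound} together with the bound $\LA\le\Rfd$ stated after~\eqref{eq:multidemand-bound}. The ratio~\eqref{eq:gap} then follows from $\Phi\le\frac92L\le\frac92\Rfd$ and $\Rfd\le U$. So the work is to show $L(M)\ge\frac29\Phi_{N,K}(M)$.

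I will do this by cases on $\delta=M-1$ and write $a=N-1$. At $\delta=0$, take $n=K-1$ in $\LB$ to get $L(1)\ge K=\Phi$.

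\textbf{Case $\delta\ge 1$.} Here $\Phi\le 1+a/\delta\le 1+a$. Use $\LA$ with $s=\max\{1,\lfloor a/(2\delta)\rfloor\}$, capped at $s_{\max}$.
\begin{itemize}
\item If $a\ge 2\delta$, then $\lfloor a/s\rfloor\ge 2\delta$, so the bracket in~\eqref{eq:multidemand-bound} is at least $s/2$. This gives $L\ge 1+s/2\gtrsim 1+a/(4\delta)$. Comparing with $1+a/\delta$, and with $K$ when the cap $s_{\max}=K-1$ binds, gives a constant comfortably above $2/9$.
\item If $a<2\delta$, then $\Phi<3$ while $L\ge1$. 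This gives a ratio of only $1/3$, which already exceeds $2/9$.
\end{itemize}

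\textbf{Case $0<\delta<1$.} Here I use $\LB$ with $n=ab$ (or $n=K-1$ if that is smaller), where $b$ is an integer near $1/(2\delta)$ rather than the maximizer $\lfloor1/\delta\rfloor$, because the half-size choice makes the quadratic loss easy to control. By~\eqref{eq:J-closed}, the objective is $1+ab-\delta ab(b+1)/2$. For $b\le 1/(2\delta)$ and $b\ge1$ this is roughly $1+ab(1-(b+1)/(4b))$, which is at least $1+ab/2\cdot\frac12$, of order $a/(4\delta)$. This is again to be compared with $1+a/\delta$.

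Two boundary subcases remain:
\begin{itemize}
\item When $1/(2\delta)<1$, that is $\delta\in(1/2,1)$, $b$ would be $0$. There I switch back to $\LA$ with $s\approx a/(2\delta)$, as in the case $\delta\ge1$, or use $\LB$ with $n\le a$ and $J_a(n)=n$. The latter gives $1+n(1-\delta)$, so I will pick whichever bound is better.
\item When $K-1$ is the binding cap, $\Phi=K$ or close to it. Taking $n=K-1$ with $K-1\le a/(2\delta)$-scale makes $\delta J_a(K-1)$ at most a constant fraction of $K-1$, giving $L\ge cK$.
\end{itemize}

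The main obstacle is the bookkeeping of floors near the case boundaries, where $b$ or $\lfloor a/s\rfloor$ is $1$ or $2$ and $a$ is small (e.g.\ $N=2$). This is where the worst constant arises, and I expect $2/9$ to come from the extremal configuration near $\delta\approx 1/2$ or $\delta$ slightly below an integer reciprocal. To handle it, I would parametrize $\delta\in[1/(m+1),1/m)$ and compute $\LB$ at its exact maximizer $n=\min\{K-1,am\}$. This gives $\LB\ge 1+am-\frac{m+1}{2}a$ when $n=am$, which is too weak for $m\le2$. For those small $m$ I would instead take the maximum of this with $\LA$ at $s=\lfloor a/2\rfloor$ or $s=a$, minimize the resulting ratio over $\delta$ in each interval, and verify that the minimum is at least $2/9$. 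The verification reduces to a finite set of one-variable rational inequalities in $a$ and $\delta$.
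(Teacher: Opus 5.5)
Your reduction to $L(M)\ge\frac29\Phi_{N,K}(M)$ is fine and agrees with the paper. So are the upper chain via Theorem~\ref{thm:achievable} and~\eqref{eq:upper-scale} and the case $\delta=0$. The gap is in the lower bound itself, and it starts in your case $\delta\ge1$. The choice $s=\lfloor a/(2\delta)\rfloor$ gives only $a/s\ge2\delta$, not $\lfloor a/s\rfloor\ge2\delta$. When $2\delta$ is not an integer, the floor can drop to $\lfloor2\delta\rfloor$. For example, take $N=21$, $M=12/5$ (so $a=20$, $\delta=7/5$) and $K\ge16$. Then $s=7$ and $\lfloor20/7\rfloor=2$, so your bound is $1+7(1-7/10)=31/10$. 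But $\frac29\Phi_{N,K}(M)=\frac29\cdot\frac{107}{7}=\frac{214}{63}>3.39$. For large $a$ and $\delta$ just below $3/2$, the ratio your choice certifies falls to about $1/8$.

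The fix is to round the demand count rather than the group size, as the paper does. Take $\ell=\lceil2\delta\rceil$ and $s=\min\{K-1,\lfloor a/\ell\rfloor\}$; then $\lfloor a/s\rfloor\ge\ell\ge2\delta$ holds automatically. Even so, bounding the bracket by $s/2$ and separately using $s\ge a/\ell-1\ge a/(3\delta)-1$ certifies only $1/6$ asymptotically. The two rounding losses must stay coupled through $z_0=\delta/\ell\in[\frac13,\frac12]$:
$\LA(M)\ge1+s(1-z_0)\ge z_0+\frac a\delta z_0(1-z_0)$.
The inequality $z_0(1-z_0)\ge\frac29$ is exactly where the constant comes from. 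When $s=K-1$, the same estimate gives $(K+1)/2$.

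The range $0<\delta<1$ is left as a plan, and the parts you do state fail. The half-size choice $b=\lfloor1/(2\delta)\rfloor$ already breaks for $\delta\in(\frac14,\frac13)$. At $\delta=0.3$ it gives $1+0.7a$, against $\frac29\Phi\approx0.74a$ for large $a$ and $K$. Your exact-maximizer estimate, which replaces $\delta$ by $1/m$, fails for $m\le2$, as you note. So all of $[\frac13,1)$ is deferred to an unexecuted verification, and that interval contains the extremal point.

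That point is $\delta=\frac23$, not $\delta\approx\frac12$ or an integer reciprocal. On the paper's sharpness sequence $N=a+1$, $K=a^2+2$, $M=5/3$, one has $\LB=\LA=1+a/3$ while $\Phi\sim3a/2$. Hence $L/\Phi\to\frac29$ with no slack. There is also a problem in the capped subcase $K-1\le a$ with $\delta\in(\frac12,1)$. There $J_a(K-1)=K-1$ and $\LB(M)=1+(K-1)(1-\delta)$, which is not a constant fraction of $\Phi=K$ as $\delta\to1$, contrary to your claim. $\LA$ must be used there.

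The paper closes these points with estimates that are exact at $\delta=\frac23$:
\begin{itemize}
\item For $\delta\le\frac12$, it evaluates $\LB$ at $g=\lfloor1/\delta\rfloor$ through $g-\frac{g(g+1)}{2x}=\frac{x-1}2+\frac{r(1-r)}{2x}\ge\frac x4$, with $x=1/\delta$ and $r=x-g$. This already covers your $m=2$.
\item For $\frac12<\delta\le\frac23$, it uses $g=1$ and $\delta(1-\delta)\ge\frac29$.
\item In the capped subcase, it uses $J_a(n)/n\le(g+1)/2$, which for $g=1$ needs $\delta\le\frac23$.
\item For all $\delta\ge\frac23$, it switches to $\LA$ with $\ell=\lceil2\delta\rceil$.
\end{itemize}
Your fallback pair ($\LB$ with $n=a$, $\LA$ with $s=\lfloor a/2\rfloor$) is the right one on $[\frac12,1)$. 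However, it must be split at $\frac23$, where $\delta(1-\delta)$ and $z_0(1-z_0)$ both equal $\frac29$.
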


\subsection{Proof of Theorem~\ref{thm:gap}}
The inequalities $\Rfd(M)\le U_{N,K}(M)\le\Phi_{N,K}(M)$ follow from Theorem~\ref{thm:achievable} and~\eqref{eq:upper-scale}. The bound $L(M)\le\Rfd(M)$ follows from~\eqref{eq:group-bound} and~\eqref{eq:multidemand-bound}. It remains to prove $L(M)\ge(2/9)\Phi_{N,K}(M)$. Write $a=N-1$ and let $\delta=M-1$ denote the effective cache memory. For $\delta=0$, choosing $n=K-1$ in~\eqref{eq:group-bound} gives $L(M)=K$. Assume henceforth that $\delta>0$.

\subsubsection{\texorpdfstring{Small effective caches: $0<\delta\le2/3$}{Small effective caches: 0 < delta <= 2/3}}
For small effective cache memory, we use the grouped bound with the maximizing user count in~\eqref{eq:n-optimal}. Define the real number $x$ and the positive integers $g,n$ by
\begin{equation}
 x=1/\delta,\qquad g=\lfloor x\rfloor,\qquad
 n=\min\{K-1,ag\}.
\end{equation}
If $K-1\le ag$, then $n=K-1$. Because the summands in~\eqref{eq:J-definition} are nondecreasing, the average of the first $n$ summands is at most the average of the first $ag$ summands. Hence $J_a(n)/n\le(g+1)/2$, and
\begin{equation}\label{eq:gap-small-capped}
 \LB(M)\ge1+n\left(1-\frac{\delta(g+1)}2\right).
\end{equation}
For $g=1$, the factor $1-\delta(g+1)/2$ is at least $1/3$. For $g\ge2$, the inequality $\delta\le1/g$ bounds that factor below by $(g-1)/(2g)\ge1/4$. It follows that
\begin{equation}
 \LB(M)\ge K/4\ge(2/9)\Phi_{N,K}(M).
\end{equation}

If $K-1>ag$, the selected user count is $n=ag$. Substituting this value in~\eqref{eq:group-bound} gives
\begin{equation}\label{eq:gap-small-uncapped}
 \LB(M)\ge1+a\left[g-\frac{\delta g(g+1)}2\right].
\end{equation}
For $\delta\le1/2$, let $r=x-g$ be the fractional part of $x$, so $r\in[0,1)$. Since $x\ge2$, rearranging gives
\begin{align}
 g-\frac{g(g+1)}{2x}
 &=\frac{x-1}{2}+\frac{r(1-r)}{2x}\notag\\
 &\ge\frac x4\ge\frac{2x}{9}.\label{eq:rounding-small}
\end{align}
For $1/2<\delta\le2/3$, we have $g=1$ and $\delta(1-\delta)\ge2/9$. Thus the expression in square brackets in~\eqref{eq:gap-small-uncapped} is again at least $2/(9\delta)$. The two ranges of $\delta$ therefore give
\begin{equation}\label{eq:gap-small-final}
 \LB(M)\ge1+\frac{2a}{9\delta}
      \ge\frac29\left(1+\frac a\delta\right)
      \ge\frac29\Phi_{N,K}(M).
\end{equation}

\subsubsection{\texorpdfstring{Larger effective caches: $\delta\ge2/3$}{Larger effective caches: delta >= 2/3}}
For larger effective cache memory, we choose the demand count $\ell$ in the multiple-demand bound so that $z_0=\delta/\ell$ lies in $[1/3,1/2]$. We then take the largest group size $s$ allowed by the file and user counts:
\begin{equation}\label{eq:rounding-large}
 \ell=\lceil2\delta\rceil,\qquad z_0=\frac\delta\ell,\qquad
 s=\min\{K-1,\lfloor a/\ell\rfloor\}.
\end{equation}
The choice of $\ell$ gives
\begin{equation}\label{eq:z-interval}
 \frac13\le z_0\le\frac12.
\end{equation}
The upper bound on $z_0$ follows from $\ell\ge2\delta$. For $\delta\in[2/3,1]$, we have $\ell=2\le3\delta$. For $\delta\ge1$, we have $\ell\le2\delta+1\le3\delta$. Thus $\ell\le3\delta$ throughout the current range, which gives the lower bound on $z_0$.

When $s\ge1$, define $\ell_s=\lfloor a/s\rfloor$. Since $s\ell\le a$, we have $\ell_s\ge\ell$. Evaluating the maximization in~\eqref{eq:multidemand-bound} at the selected group size $s$ gives $\LA(M)\ge1+s(1-\delta/\ell_s)\ge1+s(1-\delta/\ell)$. Therefore
\begin{equation}\label{eq:large-selected}
 \LA(M)\ge1+s(1-z_0).
\end{equation}
Bound~\eqref{eq:large-selected} also holds for $s=0$ because $\LA(M)\ge1$. If $s=K-1$, then $z_0\le1/2$ yields $\LA(M)\ge(K+1)/2\ge\Phi_{N,K}(M)/2$.

If $s<K-1$, then $s=\lfloor a/\ell\rfloor\ge a/\ell-1$. Substituting this lower bound on $s$ into~\eqref{eq:large-selected} gives
\begin{align}
 \LA(M)&\ge z_0+\frac a\delta z_0(1-z_0)\notag\\
       &\ge\frac29\left(1+\frac a\delta\right)
        \ge\frac29\Phi_{N,K}(M).\label{eq:large-final}
\end{align}
The second inequality uses $z_0\ge1/3$ and $z_0(1-z_0)\ge2/9$ for $z_0\in[1/3,1/2]$. The two effective-memory ranges cover every $\delta>0$, proving~\eqref{eq:main-chain} and~\eqref{eq:gap}.\hfill$\square$

\subsection{Scaling and the Role of Demand Privacy}
Theorem~\ref{thm:gap} shows that the optimal delivery rate scales as
\begin{equation}\label{eq:theta}
 \Rfd(M)=\Theta\!\left(\min\left\{K,1+\frac{N-1}{M-1}\right\}\right)
\end{equation}
for $M>1$, with $\Rfd(1)=K$. Three regimes follow for the effective cache memory $M-1$. Memory of order $(N-1)/K$ or smaller gives a delivery rate of order $K$. Between the memory scales $(N-1)/K$ and $N-1$, the delivery rate is of order $(N-1)/(M-1)$. Effective memory of order $N-1$ or larger gives a delivery rate bounded by a constant independent of all parameters. Section~\ref{sec:boundaries} treats the case of unit delivery rate.

To quantify the additional cost of demand privacy, consider the placement and decoding model of Section~\ref{sec:model} with only file privacy and external secrecy imposed. For every fixed demand vector $d$, require $I(W_{[N]\setminus\{d_k\}};Z_k,X_d)=0$ for every user $k$. Also require external secrecy, $I(W_{[N]};X_d)=0$, for every $d$. File privacy here protects the entire tuple of nonrequested files. Let $\Rfe(M)$ denote the infimum of achievable delivery rates in this model, with the same closure convention as $\Rfd(M)$. Theorem~\ref{thm:grouped} applies to this model, so
\begin{equation}
 L(M)\le\Rfe(M)\le\Rfd(M)\le U_{N,K}(M).
\end{equation}
Combining this chain with $U_{N,K}(M)\le\Phi_{N,K}(M)\le(9/2)L(M)$ from Theorem~\ref{thm:gap} gives
\begin{equation}\label{eq:privacy-price}
 \Rfd(M)\le\frac92\Rfe(M).
\end{equation}
Thus adding joint demand privacy preserves the order of the optimal memory--rate tradeoff when file privacy and external secrecy are already required.

\subsection{Sharpness of the Two-Bound Comparison}
We now show that the factor $9/2$ cannot be reduced in a uniform comparison of the achievable envelope $U_{N,K}(M)$ with the explicit lower bound $L(M)=\max\{\LB(M),\LA(M)\}$. Let $a$ tend to infinity through the positive integers, and consider
\begin{equation}\label{eq:sharp-sequence}
 N=a+1,\quad K=a^2+2,\quad M=5/3,\quad a\longrightarrow\infty.
\end{equation}
Along this sequence,~\eqref{eq:n-optimal} gives $\LB(5/3)=1+a/3$. To evaluate~\eqref{eq:multidemand-bound}, fix an integer $s\in\{1,\ldots,a\}$ and set $\ell=\lfloor a/s\rfloor$. Then
\begin{equation}
 s\left(1-\frac{2}{3\ell}\right)
 \le a\left(\frac1\ell-\frac{2}{3\ell^2}\right)\le\frac a3.
\end{equation}
The last inequality is equivalent to $(\ell-1)(\ell-2)\ge0$, which holds because $\ell$ is a positive integer. Equality is attained by $s=a$, so $\LA(5/3)=1+a/3$ as well.

For the achievable envelope, let $t_*=K(M-1)/(a+M-1)$ as in~\eqref{eq:integer-choice}. Set $t_-=\lfloor t_*\rfloor$ and $t_+=t_-+1$. These are the indices of adjacent achievable corner points whose memory coordinates bracket $M=5/3$. Both indices satisfy $t_\pm\sim2a/3$. Their delivery rates therefore satisfy $R_{t_-}\sim3a/2$ and $R_{t_+}\sim3a/2$. Since the envelope linearly interpolates between these two points, $U_{a+1,a^2+2}(5/3)\sim3a/2$. It follows that
\begin{equation}\label{eq:sharp-ratio}
 \frac{U_{a+1,a^2+2}(5/3)}{\max\{\LB(5/3),\LA(5/3)\}}
       \longrightarrow\frac92.
\end{equation}

%% file: sections/06_boundaries.tex
\section{Exact Boundary Regimes}\label{sec:boundaries}
The grouped converse determines the optimal delivery rate at $M=1$. To study $R$ near one, we first lower-bound the joint entropy of a family of broadcasts using the files decoded from those broadcasts and the users' caches. We then upper-bound the same entropy using a single user's cache.

\begin{theorem}[Complementary anchored bound]\label{thm:dual}
For every $s\in\{1,\ldots,K\}$, any feasible pair satisfies
\begin{equation}\label{eq:dual-bound}
 M\ge1+(N-1)(s-1)-(N-1)\binom s2(R-1).
\end{equation}
\end{theorem}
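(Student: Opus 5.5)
The plan is to reuse the comparison from the proof of Theorem~\ref{thm:grouped} with the roles of caches and broadcasts exchanged. I will fix a placement and a family of alternative deliveries on users $1,\ldots,s$, all on the common probability space, and bound one joint quantity from both sides. The lower bound will count files that must be decodable. The upper bound will use only the anchor cache $Z_1$, so that $H(Z_1)\le MB$ brings in $M$. Write $a=N-1$, $\delta=M-1$ and $\varepsilon=R-1$. The target inequality then reads
\[
 \delta+a\binom s2\varepsilon\ge a(s-1),
\]
so I need $a\binom s2$ broadcasts, each costing at most $\varepsilon B$ beyond the anchor's view, and a cache allowance of $\delta B$ above the one file $W_1$. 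Together these must account for $a(s-1)$ independent file units. The case $s=1$ is just $M\ge1$ from Lemma~\ref{lem:budgets}, so I will assume $s\ge2$.

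\emph{Demand family.} For each ordered pair $j<i\le s$ and each $n\in\{2,\ldots,N\}$, I take one demand vector $d^{(j,i,n)}$. In it, user $i$ requests $W_n$, user $j$ requests $W_1$, and the remaining users are set so that the anchor, user 1, requests $W_1$ whenever possible. This gives exactly $a\binom s2$ broadcasts $X_{j,i,n}$. To each one I will apply~\eqref{eq:delivery-budget} at a user requesting $W_1$, which gives $H(X_{j,i,n}\mid Z_j,W_1)\le\varepsilon B$. If the anchor can be arranged to request $W_1$ throughout, this becomes $H(X_{j,i,n}\mid Z_1,W_1)\le\varepsilon B$.

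\emph{Upper bound.} Conditioned on $(Z_1,W_1)$, all these broadcasts together carry at most $a\binom s2\varepsilon B$ bits. By~\eqref{eq:cache-exact} and Lemma~\ref{lem:views}, the anchor cache contributes $H(Z_1\mid W_1)=H(Z_1)\le MB$. Hence
\[
 H(Z_1,\{X_{j,i,n}\}\mid W_1)\le \Bigl[M+a\binom s2\varepsilon\Bigr]B.
\]

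\emph{Lower bound.} I will run an induction over $i=2,\ldots,s$. At step $i$, the broadcasts $X_{j,i,n}$ together with the cache $Z_j$ of an earlier user are used to decode the $a$ files $W_2,\ldots,W_N$ via user $i$. File privacy for user $j$, which requests $W_1$, guarantees that these $a$ files were not already determined by $(Z_j,W_1,\cdot)$. This is the same device as~\eqref{eq:anchor-file-entropy}. Each step should therefore contribute $aB$. The cost of replacing conditioning on broadcasts by conditioning on earlier caches is handled with the four-variable identity from the proof of Theorem~\ref{thm:grouped}. The $i-1$ earlier users involved at step $i$ produce the $(i-1)a$ broadcast terms, and these sum to $a\binom s2$. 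Adding the $B$ bits of $W_1$ decoded by the anchor gives a lower bound $[1+a(s-1)]B$ minus the slack terms. Comparing the two bounds yields~\eqref{eq:dual-bound}.

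\emph{Main obstacle.} I expect the hard part to be the bookkeeping in the lower bound. Each step must gain a full $aB$ while every decoding user other than the anchor costs only broadcast slack $\varepsilon B$, and never cache slack $\delta B$, since the coefficient of $\delta$ in~\eqref{eq:dual-bound} is exactly $1$. The caches $Z_2,\ldots,Z_s$ must therefore be eliminated entirely. I would first try to use external secrecy $I(W_{[N]};X_d)=0$ together with the fact that each $Z_i$ is a function of $(W_{[N]},S)$. If this does not let the caches drop out, I would instead look for a different assignment of demands to pairs $(j,i)$ that still makes the broadcast count come out as $a\binom s2$.
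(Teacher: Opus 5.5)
Your proposal has a genuine gap in the lower bound. You flag it as the main obstacle but never construct it, and the mechanism you sketch cannot supply it.

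\textbf{The decoding step.} At step $i$ you decode $W_2,\ldots,W_N$ ``via user $i$'' while invoking the privacy of a different user $j$ and its cache $Z_j$. Decoding by user $i$ requires $Z_i$, so the argument involves two users' caches. Trading one cache for the other is exactly what costs $\delta B$ per cache in Theorem~\ref{thm:grouped}.

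\textbf{The accounting.} Your upper bound already charges all $a\binom s2$ broadcasts $\varepsilon B$ each. To reach~\eqref{eq:dual-bound}, you would therefore need $H(Z_1,\{X_{j,i,n}\}\mid W_1)\ge[1+a(s-1)]B$ with no slack at all. Any $\varepsilon$-slack in the lower bound adds to the coefficient of $R-1$; the $a\binom s2\varepsilon$ your induction suggests would double it. A slack-free bound of this form is false. If the remaining users request $W_1$, then $d^{(j,i,n)}$ does not depend on $j$, so your family has only $a(s-1)$ distinct broadcasts. The bound would then give $M\ge1+a(s-1)-a(s-1)(R-1)$, which the achievable point $t=K-2$ violates when $s=K\ge3$. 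Nor can the $+1$ come from adding the $B$ bits of $W_1$ to a quantity already conditioned on $W_1$.

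\textbf{The missing idea.} Bound the joint \emph{broadcast} entropy $H(X_0,V_2,\ldots,V_s\mid W_1)$. Here $X_0$ is the all-$W_1$ broadcast, and $V_j=(X_{j,2},\ldots,X_{j,N})$ collects the broadcasts in which only user $j$ deviates, to $W_n$.

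\emph{Lower bound.} At step $j$ the same user supplies both privacy (under $X_0$) and decoding (under $V_j$), so $H(V_j\mid Z_j,W_1,X_0)\ge H(W_{2:N}\mid Z_j,W_1,X_0)=aB$. Adding the $a(j-2)$ earlier broadcasts, in each of which user $j$ requests $W_1$, costs at most $a(j-2)\varepsilon B$ by~\eqref{eq:delivery-budget}. Then $Z_j$ is simply dropped from the conditioning, which cannot decrease entropy. This, not external secrecy, is how the non-anchor caches disappear at no $\delta$ cost. External secrecy is needed only for $H(X_0\mid W_1)=H(X_0)$. Summing the steps gives the lower bound $H(X_0)+a(s-1)B-a\binom{s-1}2\varepsilon B$.

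\emph{Upper bound.} Add $Z_1$, and keep $H(X_0\mid Z_1,W_1)=H(X_0\mid Z_1)-B\le H(X_0)-B$ explicitly instead of charging it $\varepsilon B$. Charge the other $a(s-1)$ broadcasts $\varepsilon B$ each.

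\emph{Comparison.} $H(X_0)$ cancels, the $-B$ yields the $+1$, and $\binom{s-1}2+(s-1)=\binom s2$ yields the coefficient. The $\binom s2$ comes from this split bookkeeping between the two bounds, not from a family of broadcasts indexed by pairs.
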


The proof in Appendix~\ref{app:dual} applies the same residual entropy bounds as Theorem~\ref{thm:grouped}. It selects $s$ users and takes user 1 as the anchor. For each of the other $s-1$ users, it considers $N-1$ alternative demands, one for each file other than $W_1$. The entropy of the broadcast in which every user requests $W_1$ appears in both bounds on the joint broadcast entropy and cancels. The remaining terms give the coefficient $\binom s2$.

With $a=N-1$ and $\delta=M-1$, Theorem~\ref{thm:dual} gives the explicit lower bound on the delivery rate
\begin{equation}\label{eq:dual-load}
 \LD(M)=1+\max_{2\le s\le K}
 \left[\frac{a(s-1)-\delta}{a\binom s2}\right]^+.
\end{equation}

\begin{corollary}[Endpoints and an exact general segment]\label{cor:endpoints}
For all $N,K\ge2$,
\begin{equation}\label{eq:exact-endpoints}
 \Rfd(1)=K,\qquad \Mone=1+(N-1)(K-1).
\end{equation}
Let $M_{\max}=1+(N-1)(K-1)$. Throughout
\begin{equation}\label{eq:last-range}
 1+\frac{(N-1)(K-2)}2\le M\le M_{\max},
\end{equation}
the exact tradeoff is
\begin{equation}\label{eq:last-segment}
 \Rfd(M)=1+\frac{M_{\max}-M}{(N-1)\binom K2}.
\end{equation}
For $M\ge M_{\max}$, the optimum equals $1$.
\end{corollary}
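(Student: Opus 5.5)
The corollary follows by matching the achievable corner points of Theorem~\ref{thm:achievable} against the converses of Theorems~\ref{thm:grouped} and~\ref{thm:dual}. I would handle the two endpoints first, then the segment between them.

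\emph{The endpoint $M=1$.} Achievability of $(1,K)$ is the one-time-pad unicast point, which is also the corner $t=0$. For the converse, the proof of Theorem~\ref{thm:gap} already shows that $\delta=0$ and $n=K-1$ in~\eqref{eq:group-bound} give $\LB(1)=K$. Hence $\Rfd(1)=K$.

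\emph{The endpoint $R=1$.} The corner $t=K-1$ gives $M_{K-1}=1+(N-1)(K-1)$ with $R_{K-1}=1$, so $\Mone\le M_{\max}$. For the converse, I would apply Theorem~\ref{thm:dual} with $s=K$ at $R=1$. This gives $M\ge 1+(N-1)(K-1)$. The pass to the closure is harmless, because~\eqref{eq:dual-bound} is linear in $(M,R)$ and so survives limits $R\downarrow1$. Hence $\Mone=M_{\max}$. For $M\ge M_{\max}$, the rate $1$ is achievable by leaving extra memory unused, and $R\ge1$ holds by Lemma~\ref{lem:budgets}. So the optimum there equals $1$.

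\emph{The last segment.} For the converse, I would rearrange Theorem~\ref{thm:dual} with $s=K$:
\[
R\ge 1+\frac{1+(N-1)(K-1)-M}{(N-1)\binom K2}=1+\frac{M_{\max}-M}{(N-1)\binom K2},
\]
which is exactly~\eqref{eq:last-segment} as a lower bound. For achievability, I would show that the line through the corners $t=K-2$ and $t=K-1$ coincides with this bound. Corner $t=K-2$ has
\[
M_{K-2}=1+\frac{(N-1)(K-2)}{2},\qquad R_{K-2}=\frac{K}{K-1}.
\]
Checking it against the bound,
\[
M_{\max}-M_{K-2}=(N-1)\Bigl[(K-1)-\tfrac{K-2}{2}\Bigr]=\frac{(N-1)K}{2},
\]
so the bound equals $1+K/\bigl(2\binom K2\bigr)=1+1/(K-1)=K/(K-1)=R_{K-2}$. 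Both corners therefore lie on the lower-bound line. Memory sharing between them (the lower convex envelope $U_{N,K}$, with slope~\eqref{eq:s-slope} at $t=K-2$) achieves the line on $[M_{K-2},M_{\max}]$, which is exactly the interval~\eqref{eq:last-range}. For $K=2$, corner $t=0$ is $(1,2)$, and the formula gives $1+(M_{\max}-M)/(N-1)$, which is consistent.

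The main obstacle is the converse step, Theorem~\ref{thm:dual}, but it is assumed here. The only real check is the algebraic coincidence at $t=K-2$, which I verified above. The one remaining care point is that Theorem~\ref{thm:dual} is stated for feasible pairs and must extend to the closure, which the linearity argument handles.
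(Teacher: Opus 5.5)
Your proposal is correct and follows the paper's proof step for step. You use the grouped bound with $n=K-1$ at $M=1$, Theorem~\ref{thm:dual} with $s=K$ for both the unit-rate endpoint and the final segment, and memory sharing between the corners $t=K-2$ and $t=K-1$. Your explicit check that the corner $t=K-2$ lies on the $s=K$ line, and your remark that the linear bound passes to the achievable closure, fill in details the paper leaves implicit.
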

\begin{proof}
At $M=1$,~\eqref{eq:group-bound} with $n=K-1$ gives $R\ge K$. Independently encrypted unicast attains this delivery rate. At $R=1$, Theorem~\ref{thm:dual} with $s=K$ gives $M\ge M_{\max}$. The construction with $t=K-1$ in Theorem~\ref{thm:achievable} attains this memory. The straight line through the achievable points with $t=K-2$ and $t=K-1$ is~\eqref{eq:last-segment}. This line is also the lower bound on the delivery rate obtained from~\eqref{eq:dual-bound} with $s=K$. Memory sharing therefore attains the lower bound over~\eqref{eq:last-range}. For $M\ge M_{\max}$, the construction at unit delivery rate and the lower bound $R\ge1$ prove that the optimum is one.
\end{proof}

Our construction at unit delivery rate serves every demand vector with the memory $M_{\max}$ in~\eqref{eq:exact-endpoints}. This memory is $K-1$ file units smaller than that of the general point at unit delivery rate reported by Qi--Ravi~\cite{qi2022}. When $N\ge K$, it is also $K-2$ file units smaller than that of the distinct-demand point at unit delivery rate reported by Qi \emph{et al.}~\cite{qi2024}. Table~\ref{tab:positioning} lists the demand assumptions for these comparisons. Under file privacy alone, the exact minimum memory at unit delivery rate in~\cite{fang2025} is $(N-1)(K-1)$. Thus joint file and demand privacy requires exactly one additional file unit at unit delivery rate.

The exact segment covers the whole nonconstant tradeoff when $K=2$. In this case,
\begin{equation}\label{eq:two-users}
 \Rfd(M)=\max\left\{1,\frac{2N-1-M}{N-1}\right\},\qquad M\ge1.
\end{equation}
For $N=2$, the grouped bound also matches the first segment of the construction:
\begin{equation}\label{eq:two-files}
 \Rfd(M)=K-\binom K2(M-1),\quad
 1\le M\le\frac K{K-1}.
\end{equation}
For the lower bound in~\eqref{eq:two-files}, use $a=1$ and $n=K-1$ in~\eqref{eq:group-bound}. Memory sharing between the achievable points with $t=0$ and $t=1$ attains this bound. For $N=K=2$,~\eqref{eq:two-users} recovers the exact region in~\cite[Th.~2]{qi2022}.

%% file: sections/07_comparisons.tex
\section{Numerical Results and Comparisons}\label{sec:numerics}
We compare the achievable convex envelopes with the explicit converse bounds. Define the displayed anchored lower bound by
\begin{equation}\label{eq:display-lower}
 \Lall(M)=\max\{\LB(M),\LA(M),\LD(M)\}.
\end{equation}
For comparison, we also evaluate the classical file-private cut-set bound~\cite[Th.~2]{private2018}. Set $s_c=\min\{\lfloor N/2\rfloor,K\}$. For each integer $s\in\{1,\ldots,s_c\}$, let $b_s=\lfloor N/s\rfloor$. The bound is
\begin{equation}\label{eq:classic-cutset}
 L_{\mathrm{cut}}(M)=1+\max_{1\le s\le s_c}
 \frac{(s-1)[b_s-M]^+}{b_s-1}.
\end{equation}
Joint privacy includes file privacy, so $L_{\mathrm{cut}}(M)$ also lower-bounds the optimal delivery rate in our model.

\subsection{Memory Savings and an Exact Large-Cache Segment}
To illustrate the memory savings and the exact segment near unit delivery rate, Figure~\ref{fig:tradeoff} considers $N=K=4$. The difference-sharing construction has corner points
\begin{equation}
 (1,4),\quad (2,2),\quad (4,4/3),\quad (10,1).
\end{equation}
To compare the schemes over the full demand space, we use the secret-sharing/private-key (SPK) implementation given in Appendix~\ref{app:spk}. It secret-shares all $N$ files and uses independent uniform privacy keys. This implementation achieves the SPK numerical family in~\cite[Th.~1]{qi2024} for every demand vector. After including encrypted unicast, its delivery rates $4$, $2$, $4/3$, and $1$ require memories $1$, $7/3$, $5$, and $13$, respectively. We also display memory sharing between the two general endpoints of Qi--Ravi~\cite[Th.~3]{qi2022}.

\begin{figure}[t]
\centering
\includegraphics[width=\columnwidth]{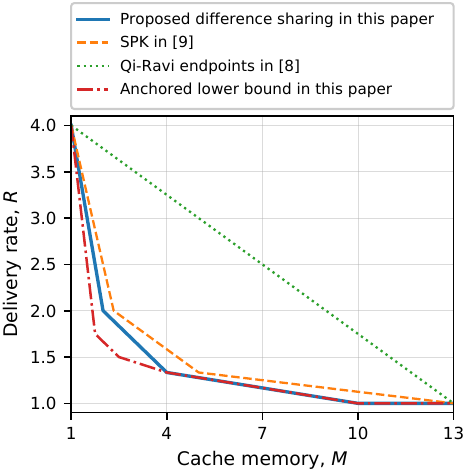}
\caption{All-demand memory--rate comparison for $N=K=4$. The anchored lower bound is~\eqref{eq:display-lower}. The achieved and lower-bound curves agree at $M=1$ and throughout $4\le M\le10$.}
\label{fig:tradeoff}
\end{figure}

At delivery rate $2$, difference sharing requires memory $2$ compared with the SPK value $7/3$. At unit delivery rate, the required memories are $10$ and $13$, respectively. By Corollary~\ref{cor:endpoints}, difference sharing attains the optimal delivery rate throughout $4\le M\le10$.

\subsection{Many Users and Small Caches}
To examine small caches when there are many more users than files, Figure~\ref{fig:smallcache} uses $N=5$ and $K=50$. At $M=1$, the grouped bound gives $50$, matching the achievable delivery rate. At $M=6/5$, the achievable delivery rate and three lower bounds are
\begin{equation}
 \begin{aligned}
 U_{5,50}(6/5)&=151/10,\quad \LB(6/5)=9,\\
 \LA(6/5)&=21/5,\quad L_{\mathrm{cut}}(6/5)=9/5.
 \end{aligned}
\end{equation}
At this memory, the grouped bound is five times the classical cut-set value. At larger cache sizes, the multi-demand bound can exceed the grouped bound. The maximum of the grouped and multi-demand bounds gives the uniform guarantee of Theorem~\ref{thm:gap}.

\begin{figure}[t]
\centering
\includegraphics[width=\columnwidth]{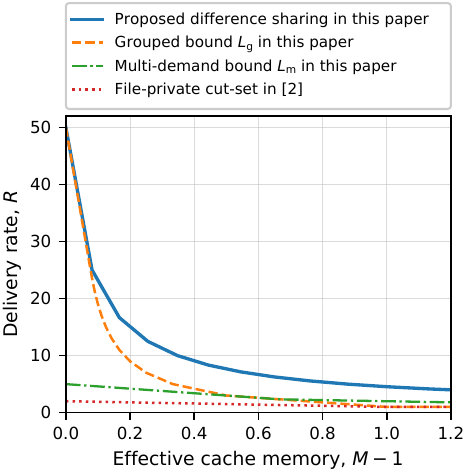}
\caption{Small-cache behavior for $N=5$ and $K=50$. The grouped bound matches the delivery rate at $M=1$ and improves on the displayed cut-set bound for small caches. The multi-demand bound is useful at larger cache sizes.}
\label{fig:smallcache}
\end{figure}

%% file: sections/08_appendices.tex
\section{Proof of the Private-View Representation}\label{app:views}
Fix a user $k\in[K]$ and a requested index $i\in[N]$. Condition on $D_k=i$, and let $V_k=(Z_k,X)$ denote the user's observation. Joint privacy gives
\begin{equation}
 (W_{[N]\setminus\{i\}},D_{-k})\ \perp\ V_k\mid D_k=i.
\end{equation}
Under the conditioning $D_k=i$, correctness makes $W_i$ a deterministic function of $V_k$. The pair $(V_k,W_i)$ is therefore also independent of $(W_{[N]\setminus\{i\}},D_{-k})$ under this conditioning. Once $W_i=w_i$ is fixed, the conditional law of $V_k$ cannot depend on the other files or demands. This proves~\eqref{eq:kernel}.

For the cache independence identity, apply the preceding argument to a fixed demand vector $d$ with $d_k=i$. File privacy and correctness imply that $P_{Z_k\mid W=w,D=d}$ depends only on $w_i$. Since placement is independent of the demands, this law equals $P_{Z_k\mid W=w}$ for every $d$. Since $N\ge2$, the same cache law depends only on $w_1$ and, separately, only on $w_2$. The files are independent, so every combination of values in their supports has positive probability. The cache law must therefore be constant in $w$, proving the first identity in~\eqref{eq:independences}.

We next prove that the broadcast is independent of both the library and the demands. First fix a demand vector $d\in[N]^K$ in which two users request distinct files $W_i$ and $W_j$, with $i\ne j$. Applying~\eqref{eq:kernel} to the user requesting $W_i$ shows that $P_{X\mid W=w,D=d}$ depends only on $w_i$. Applying it to the user requesting $W_j$ shows that the same law depends only on $w_j$. Every combination of these file values has positive probability, so the broadcast law is constant in $w$ for this demand vector.

To extend the conclusion to every demand vector, consider $d$ and $d'$ that differ in just one user's request. Since $K\ge2$, another user's request remains fixed. The broadcast marginal of that user's kernel gives
\begin{equation}
 P_{X\mid W=w,D=d}=P_{X\mid W=w,D=d'}.
\end{equation}
Any demand vector can be transformed into any other by changing one user's request at a time. The equality above therefore makes the broadcast law identical for every demand vector. Since the law is independent of $w$ for a vector with two distinct requests, it is independent of $w$ for all vectors. Consequently $I(W_{[N]},D;X)=0$, as claimed.

\section{Proof of the Complementary Anchored Bound}\label{app:dual}
Fix an integer $s\in\{1,\ldots,K\}$, and set $a=N-1$ and $\varepsilon=R-1$. The case $s=1$ is $M\ge1$. For $s\ge2$, we compare two bounds on the joint entropy of broadcasts associated with users $1,\ldots,s$. Let $X_0$ be the broadcast when every user requests $W_1$. For $j=2,\ldots,s$ and $n=2,\ldots,N$, let $X_{j,n}$ be the broadcast when user $j$ requests $W_n$ and every other user requests $W_1$. All of these broadcasts represent alternative demands for the same placement, as in Section~\ref{sec:converse}. For each $j$, let $V_j$ collect the broadcasts associated with user $j$, and let $\cU_{j-1}$ collect $X_0$ and the broadcasts associated with the preceding users:
\begin{equation}
 \begin{aligned}
 V_j&=(X_{j,2},\ldots,X_{j,N}),\\
 \cU_{j-1}&=(X_0,V_2,\ldots,V_{j-1}).
 \end{aligned}
\end{equation}
In particular, $\cU_1=(X_0)$. User $j$ learns no information about $W_{2:N}$ from $(Z_j,X_0)$. It also decodes $W_1$ from that observation, so further conditioning on $W_1$ reveals no information about $W_{2:N}$. The cache $Z_j$ together with $V_j$ decodes all $a$ other files. Therefore
\begin{equation}\label{eq:dual-increment-base}
 \begin{aligned}
 &H(V_j\mid Z_j,W_1,X_0)\\
 &\quad\ge H(W_{2:N}\mid Z_j,W_1,X_0)=aB.
 \end{aligned}
\end{equation}
In every broadcast in $\cU_{j-1}$, user $j$ requests $W_1$. There are $a(j-2)$ such broadcasts besides $X_0$. Conditional subadditivity and~\eqref{eq:delivery-budget} give
\begin{equation}
 H(V_2,\ldots,V_{j-1}\mid Z_j,W_1,X_0)
     \le a(j-2)\varepsilon B.
\end{equation}
Adding the earlier broadcasts $V_2,\ldots,V_{j-1}$ to the conditioning in~\eqref{eq:dual-increment-base} reduces the entropy by at most $a(j-2)\varepsilon B$. Removing $Z_j$ from the conditioning can only increase the entropy. Hence
\begin{equation}
 H(V_j\mid W_1,\cU_{j-1})\ge aB-a(j-2)\varepsilon B.
\end{equation}
Since $X_0$ is independent of $W_1$, summing the increments yields
\begin{align}\label{eq:dual-entropy-lower}
 &H(X_0,V_2,\ldots,V_s\mid W_1)\notag\\
 &\quad\ge H(X_0)+a(s-1)B-a\binom{s-1}{2}\varepsilon B.
\end{align}

We next upper-bound the same joint broadcast entropy using the anchor cache $Z_1$. User 1 requests $W_1$ in every demand under consideration. To cancel the base-broadcast term when the two bounds are compared, we retain $H(X_0)$ explicitly. The identity~\eqref{eq:delivery-exact} gives
\begin{equation}
 \begin{aligned}
 H(X_0\mid Z_1,W_1)&=H(X_0\mid Z_1)-B\\
                  &\le H(X_0)-B.
 \end{aligned}
\end{equation}
By~\eqref{eq:delivery-budget}, each of the remaining $a(s-1)$ broadcasts has entropy at most $\varepsilon B$ conditioned on $(Z_1,W_1)$. Adding $Z_1$ to the broadcast tuple can only increase its entropy conditioned on $W_1$. Conditional subadditivity and $H(Z_1\mid W_1)=H(Z_1)$ then give
\begin{align}\label{eq:dual-entropy-upper}
 &H(X_0,V_2,\ldots,V_s\mid W_1)\notag\\
 &\quad\le H(Z_1)+H(X_0)-B+a(s-1)\varepsilon B.
\end{align}
Comparing~\eqref{eq:dual-entropy-lower} and~\eqref{eq:dual-entropy-upper} cancels $H(X_0)$ and yields
\begin{equation}
 H(Z_1)\ge\left[1+a(s-1)-a\binom s2\varepsilon\right]B.
\end{equation}
Using $H(Z_1)\le MB$ proves Theorem~\ref{thm:dual}.

%% file: sections/09_additional_appendices.tex
\section{Full-Library Secret Sharing With Uniform Privacy Keys}\label{app:spk}
For the comparison in Section~\ref{sec:numerics}, we implement the secret-sharing/private-key (SPK) points of Qi \emph{et al.}~\cite{qi2024} using uniform privacy keys. The implementation supports both repeated and distinct requests. Fix $t\in\{0,\ldots,K-1\}$ and use the dimensions, subset indices, and fixed public matrices from Section~\ref{sec:achievability}.

For each $n\in[N]$, draw $\widetilde U_n$ as a vector of $z$ independent uniform packets in $\F_q^L$. The vectors $\widetilde U_n$ are mutually independent and independent of the library. Encode all $N$ files separately:
\begin{equation}
 \widetilde S_n=PW_n+Q\widetilde U_n,\qquad n\in[N].
\end{equation}
For each row $T\subseteq[K]$ of size $t$, let $\widetilde S_T=(\widetilde S_{n,T}:n\in[N])^{\mathsf T}$. Draw a uniform vector $p_k\in\F_q^N$ for each user $k\in[K]$ and a uniform key packet $T_A\in\F_q^L$ for every $A\subseteq[K]$ of size $t+1$. All private vectors and keys are mutually independent and independent of the library and sharing randomness. User $k$ stores the shares of every file on $\cI_k$, its private vector $p_k$, and one mask for every missing row $T\in\cJ_k$:
\begin{equation}
 V_{k,T}=T_{T\cup\{k\}}+p_k^{\mathsf T}\widetilde S_T.
\end{equation}
For this full-library construction, let $e_1,\ldots,e_N$ denote the standard basis vectors of $\F_q^N$. Publish $q_k=p_k+e_{d_k}$ for each $k\in[K]$, and write $\boldsymbol q=(q_1,\ldots,q_K)$. For every $A\subseteq[K]$ of size $t+1$, transmit
\begin{equation}
 Y_A=T_A+\sum_{j\in A}q_j^{\mathsf T}\widetilde S_{A\setminus\{j\}}.
\end{equation}
User $k\in A$ subtracts the cached interference terms from $Y_A$ and then subtracts $V_{k,A\setminus\{k\}}$. The remaining packet is the missing share $\widetilde S_{d_k,A\setminus\{k\}}$. Combining all missing and cached shares of the requested file recovers that file.

To prove joint privacy, fix an observing user $k$ and its requested index $i$. A simulator draws uniform public controls $\boldsymbol q$ and sets $p_k=q_k-e_i$. Using $W_i$ and freshly drawn sharing randomness, it generates all shares of the requested file. For every other file, the cached shares are a fixed shift of $Q_{\cI_k}\widetilde U_n$ for any fixed library. Since $Q_{\cI_k}$ is invertible and the sharing vectors are independent, the simulator draws those cached shares independently and uniformly. Each cached mask and each broadcast indexed by a set excluding $k$ uses a different independent key. Conditional on all shares, these masks and broadcasts are therefore independent and uniform, so the simulator draws them independently of the generated shares. For each $T\in\cJ_k$, it reconstructs the remaining broadcast from
\begin{equation}
 \begin{aligned}
 Y_{T\cup\{k\}}={}&V_{k,T}+\widetilde S_{i,T}\\
 &+\sum_{j\in T}q_j^{\mathsf T}
                 \widetilde S_{(T\cup\{k\})\setminus\{j\}}.
 \end{aligned}
\end{equation}
Every share in the sum is cached by user $k$, because its row label contains $k$. The simulator uses only $(i,W_i)$ and reproduces the user's full observation for every demand vector with $d_k=i$. This proves joint privacy for repeated and distinct demands alike.

Counting $Nz$ cached share packets, $h$ mask packets, and $\binom K{t+1}$ transmitted packets gives the memory and delivery rate
\begin{equation}\label{eq:s-spk}
 M_t^{\mathrm{SPK}}=1+\frac{Nt}{K-t},\qquad
 R_t^{\mathrm{SPK}}=\frac K{t+1}.
\end{equation}
The private and public controls have vanishing normalized size as the packet length grows. The proposed construction replaces the $N$ full-library share groups by $N-1$ difference-share groups. The aligned correction leaves the delivery rate and mask count unchanged. The memory reduction is therefore exactly $t/(K-t)$ for each $t\in\{0,\ldots,K-1\}$.